\DeclareMathOperator{\Var}{Var}
\DeclareMathOperator*{\sign}{sign}
\newcommand{\R}{\mathbb{R}}
\newcommand{\Prob}{\mathbb{P}}
\newcommand{\E}{\mathbb{E}}
\renewcommand{\geq}{\geqslant}
\renewcommand{\leq}{\leqslant}
\newcommand{\lp}{\left(}
\newcommand{\rp}{\right)}
\newcommand{\lc}{\left[}
\newcommand{\rc}{\right]}
\newtheorem{thm}{Theorem}[section]
\newtheorem{prop}[thm]{Proposition}
\newtheorem{lemma}[thm]{Lemma}
\newtheorem{corollary}[thm]{Corollary}
\theoremstyle{definition}
\newtheorem{assump}[thm]{Assumption}
\begin{document}


\title{The feasibility of equilibria in large ecosystems: \linebreak a primary but neglected concept in the complexity-stability debate}

\author[1]{Michael Dougoud}
\author[1,3]{Laura Vinckenbosch}
\author[2]{Rudolf Rohr}
\author[2]{Louis-F\'elix Bersier}
\author[1,*]{Christian Mazza}

\affil[1]{University of Fribourg, Department of Mathematics, Chemin du Mus\'ee 23, CH-1700 Fribourg}
\affil[2]{University of Fribourg, Department of Biology, Unit of Ecology and Evolution, Chemin du Mus\'ee 10, CH-1700 Fribourg}
\affil[3]{University of Applied Sciences Western Switzerland - HES-SO, HEIG-VD, CH-1401 Yverdon-les-Bains}
\affil[*]{Corresponding author : christian.mazza@unifr.ch}

\date{\vspace{-1ex}\today\vspace{-1ex}}
\maketitle



\begin{abstract}
The consensus that complexity begets stability in ecosystems was challenged in the seventies, a result recently extended to ecologically-inspired networks. The approaches assume the existence of a feasible equilibrium, i.e. with positive abundances. However, this key assumption has not been tested. We provide analytical results complemented by simulations which show that equilibrium feasibility vanishes in species rich systems. This result leaves us in the uncomfortable situation in which the existence of a feasible equilibrium assumed in local stability criteria is far from granted. We extend our analyses by changing interaction structure and intensity, and find that feasibility and stability is warranted irrespective of species richness with weak interactions. Interestingly, we find that the dynamical behaviour of ecologically inspired architectures is very different and richer than that of unstructured systems. Our results suggest that a general understanding of ecosystem dynamics requires focusing on the interplay between interaction strength and network architecture.
\end{abstract}

\setcounter{tocdepth}{2}

\tableofcontents


\section{Introduction}

A central question in ecology is to understand the factors and conditions that ensure ecological systems to persist, a requisite for the sustained provisioning of vital ecosystem services. This question of a ``balance of nature" has a long history in science~\cite{Kingsland1985,Cuddington2001}, and the consensus that ``complexity begets stability" emerged among ecologists in the fifties. MacArthur~\cite{McArthur55} had a radical view on this question, arguing that stability will increase with the two fundamental ingredients of complexity, the number of species and of interactions. The argument for this claim was borrowed from Odum~\cite{Odum1953}: stability increases with the number of paths through which energy can flow up in a food web. Later, Elton~\cite{Elton58} provided a suite of arguments for this positive relationship. The first one is the following: mathematical systems composed of one predator and one prey exhibit conspicuous fluctuations. Implicit in this argument is that more complex systems should be more stable, which remained untested at that time. In the seventies, Levins~\cite{Levins1968, Levins1974}, Ashby and Gardner~\cite{Gardner1970}, and May~\cite{May1971} showed numerically that large random systems may be expected to be stable up to a certain connectance threshold, contradicting the earlier ideas that complex natural systems are more likely to be stable. In his impactful work, May~\cite{May1972, May2001Book} showed mathematically using random matrix theory~\cite{Wigner1958,Ginibre1965} that large and random ecosystems are inherently unstable. His approach was based on a mathematical study of community matrices, which represented unstructured random networks of interacting species. He used a local-stability analysis assuming these systems were at equilibrium. He derived a simple and elegant criterion for system stability, which is a milestone in the stability-complexity debate~\cite{Bersier2007}. May~\cite{May2001Book} concluded that there was no comfortable theorem assuring that increased complexity will lead to stable systems, and that the task was therefore to ``elucidate the devious strategies which make for the stability in enduring natural systems".

Recently, the work of May was revisited by Allesina and collaborators~\cite{Allesina2012,Allesina2015}. They established stability criteria for systems where species interact specifically via either competition, mutualism, or predation. As such, their contribution is a refinement of May's approach that considered mixtures of interaction types. The general conclusion is that, in species-rich communities, interactions should be moderate to ensure the stability of the equilibrium. They also performed simulations to study stability in randomly generated systems whose architecture mimics empirically observed food webs~\cite{Cohen1990community,Williams2000simple}. They arrived at the counter-intuitive conclusion that such ecologically-inspired structured graphs are less stable than unstructured ones. They also focused on interaction weights, and interestingly found that weak interactions should increase the stability for mutualistic and competitive webs, but decrease the stability of food webs (see also~\cite{Gellner2016}, where more precise statements are obtained for interactions weights of different intensity and symmetry). It was also found that more realistic structures seem to be detrimental for stability, and that the structure alone plays a minor role for stability~\cite{James2015,Johnson2014}.

All the above approaches are local-stability analyses, where systems are linearised at the equilibrium point and stability is evaluated only in the close vicinity of this point. In ecological systems, this point is meaningful only if the equilibrium-abundances are all strictly positive; in other words, if the equilibrium is feasible. In these previous works, equilibria were simply assumed feasible, without further analysis; basically, the true Jacobian, evaluated at the equilibrium, was replaced by a random matrix. Interestingly, soon after the work of May~\cite{May2001Book}, Roberts~\cite{Roberts1974} noted that May's approach indeed remained silent concerning the feasibility of the equilibria. In a simulation study, he found that, in feasible systems, stability increased with the number of species. However, Roberts did not explore the very question of the relationship between system's size and feasibility probability. This is a key issue since feasibility is a prerequisite to local-stability analysis. In the particular case of competitive system, this question had to await the work of Logofet~\cite{Logofet1993Book} who found that ``equilibriumness" was vanishing with species richness. Since then, few contributions have explored the question of feasibility, and if so in particular systems and with regard to the characteristics of the interactions rather than to species number (e.g., for asymmetric competition~\cite{Bastolla2005,Nattrass2012} or for mutualism~\cite{Rohr2013}). Note that already in 1970, Vandermeer~\cite{vandermeer1970} studied a question very related to feasibility, the expected number of species that can coexist in competitive communities.

Here, we present an extensive study of the fundamental question of equilibrium feasibility, and expose its underlying mathematical principles. Two classes of networks are studied: 1) networks which do not possess any particular topological structure, for which we consider random, mutualistic, competitive, and predator-prey interactions, and 2) predator-prey networks designed to capture food web architecture, following the cascade~\cite{Cohen1990community}, niche~\cite{Williams2000simple} and nested-hierarchy~\cite{Cattin2004} models. We show that, in situations compatible with the above works~\cite{May1972,Allesina2012,Allesina2015}, the existence of a feasible equilibrium is not guaranteed, since the probability to observe such a point decreases exponentially towards zero as the size of the network grows. Interestingly, we find that with weak interactions, non-trivial feasible equilibria are found only in ecologically-inspired systems, in a way that is ecologically sensible.


\section{Results}

We adopt the following strategy: for models of interacting species, we use the demographic parameters most favourable to feasibility, and randomise the interactions to estimate the probability to find equilibria with only positive abundances. This is achieved for different species richnesses, network architectures, and interaction intensities. Additionally, in cases where the equilibrium is trivially feasible, we extend our analyses by investigating the effect of the demographic parameters on feasibility. We provide analytical results for cases compatible with May's approach, and otherwise rely on simulations.

We consider a classical Lotka-Volterra model~\cite{Gillman1997} for large, complex ecological networks with $S$ species. The per capita effect of species $j$ on species $i$ is encapsulated in a coefficient $a_{ij}$ ($a_{ij}<0$ and $a_{ji}<0$ for competition; $a_{ij}>0$ and $a_{ji}>0$ for mutualism; $a_{ij}>0$ and $a_{ji}<0$ for predation of $i$ on $j$). If $a_{ij}=0$, species $j$ has no direct effect on $i$,  which happens with a probability $1-C$, where $C$ denotes the connectance of the graph ($C=L/(S\cdot(S-1))$ with $L$ the total number of links in the network). The  interactions $a_{ij}$ form the $S \times S$ interaction matrix $A$. 

In order to exhibit the effects of network structure on the dynamics, the $a_{ij}$s are randomised with fixed mean and standard deviation. Their magnitude plays a key role for the local stability of networks~\cite{McCann1998,Emmerson2004}. 
Therefore, we introduce a parameter $0\leq\delta\leq1$ ruling the average interaction spread. The $a_{ij}$s are divided by the linkage density (a measure of complexity given by $CS$~\cite{May1972}) raised to the power $\delta$ (see ref.~\cite{Stouffer2005,Rohr2013}). This normalisation is mathematically sensible since Wigner's theory~\cite{Wigner1958}, on which May's and Allesina's results are based, is built for the case $\delta=1/2$. Analytically, we will show that three regimes emerge when $S$ becomes large: strong ($0\leq\delta<1/2$), moderate ($\delta=1/2$), and weak interactions ($1/2<\delta\leq1$). Intraspecific competition is included as customarily~\cite{May1972,May2001Book,Allesina2012} with a common coefficient $\theta<0$, separated from the interaction matrix and unaffected by the normalisation constant. Mathematically, in this setting, the vector of abundances $x$ solves the following system of differential equations 
$$\frac{{\rm d}x_i}{{\rm d}t} =  x_i \lp r_i+\theta x_i+ \frac{1}{(CS)^{\delta}} \sum_{j=1}^S a_{ij}\,x_j\rp, \quad\mbox{ for all }  1 \leq i \leq S$$
with $r_i$ the intrinsic growth rates of species $i$. In matrix notation, it becomes
$$\dot x=x\circ(r+(\theta I + (CS)^{-\delta} A)x )$$
with $I$ the $S\times S$ identity matrix, and where $\circ$ denotes the product defined by $x\circ y=(x_1y_1,\ldots,x_Sy_S)$.


\subsection{Existence of a feasible equilibrium} 
A system is feasible if all abundances at equilibrium are positive. If this equilibrium exists, the vector $x^*$ corresponds to a point for which the dynamics of the Lotka-Volterra system stops varying, and is given by
\begin{equation}\label{equilibria}
x^*=-(\theta I + (CS)^{-\delta} A )^{-1}r \,.
\end{equation}
If the solution to this equation involves negative abundances, then the system is not feasible. Since the interactions $a_{ij}$ are random variables, it is natural to study $P_S$, the probability to observe a feasible equilibrium, which is given by
\begin{equation}\label{ProbaFeas}
P_S=P(x_i^* > 0,\ \hbox{for all species}\ i).
\end{equation} This probability depends on  $\theta$, $C$, $S$, $\delta$, $A$ and $r$. 
For a given $\delta$, $C$, and type of network,
we choose the growth rates vector $r=(r_i)$ most favourable for feasibility, or, in case of trivially feasible equilibria, $r_i$ to be independent and bounded random variables. The most favourable growth rates correspond to the deterministic vector making the equilibrium to be, on average, as far as possible from the boundaries of the feasibility domain. This deterministic vector is referred to as the mean structural vector. By definition, the feasibility domain is the positive orthant of the phase space. We show  in the Appendix that this vector is related to a particular way of choosing intrinsic growth rates, which have been used in ref~\cite{James2015} to avoid  negative abundances.
  
It is now possible to compute $P_S$ with regard  to the randomness of the $a_{ij}$s.
Extending results on large systems of random equations~\cite{Geman1982}, we develop analytical formulas for $P_{S}$ for our first class of unstructured models when $S$ tends towards infinity, and use Monte Carlo simulations to estimate $P_S$ for the second class of structured models (see Table 1). Before exposing our results, we briefly discuss the link between stability and feasibility.


\subsection{Link between stability and feasibility in random models} 
It is only when $A$ is such that $x^*$ is feasible that the local-stability analysis of $x^*$ is sensible. This analysis involves the Jacobian matrix evaluated at $x^*$ (the so-called community matrix), 
\begin{equation}\label{Jacobian}
J(x^*)=\textrm{diag}(x^*)\lp \theta I +(CS)^{-\delta}  A\rp,
\end{equation}
which of course depends on the random vector $x^*$ and on the random interactions matrix $A$. The equilibrium $x^*$ is linearly stable when  the real parts of all eigenvalues of $J(x^*)$ are negative.
Under May's approach, and similarly in ref.~\cite{Allesina2012,Allesina2015,Gellner2016}, $J(x^*)$
is replaced by the random matrix  
$\tilde J = \theta I +(CS)^{-\delta}  A$.
All the information on $x^*$ and on the inherent relations between $x^*$ and $A$ (see equation (1)) are overlooked. The problematic point is that one obtains stability results by focusing on the eigenvalues of
 $\tilde J$ even when a feasible equilibrium $x^*$ does not exist.
 
For example, May~\cite{May1972} considers that the elements of $A$  are such that all diagonal entries are set to 0, and  that all off-diagonal random elements are independent and  set to 0 with probability $1-C$. A non-zero entry is sampled from any centred distribution with standard deviation $\sigma$.
 Applying results from random matrix theory~\cite{Ginibre1965}, the eigenvalues of $\tilde J$ will then have negative real parts for large $S$ when
\begin{equation}\label{May}
\frac{\sigma}{(CS)^\delta}\sqrt{CS} < \vert\theta\vert,
\end{equation}
which is May's stability condition. We show in the Appendix that May's criterion still holds for the matrix $J(x^*)$ under the additional assumption that $x^*$ is feasible (see Fig.~\ref{fig:main1} (b)). Thus it appears critical to study the feasibility of such systems.

\begin{figure}[H!]
\includegraphics{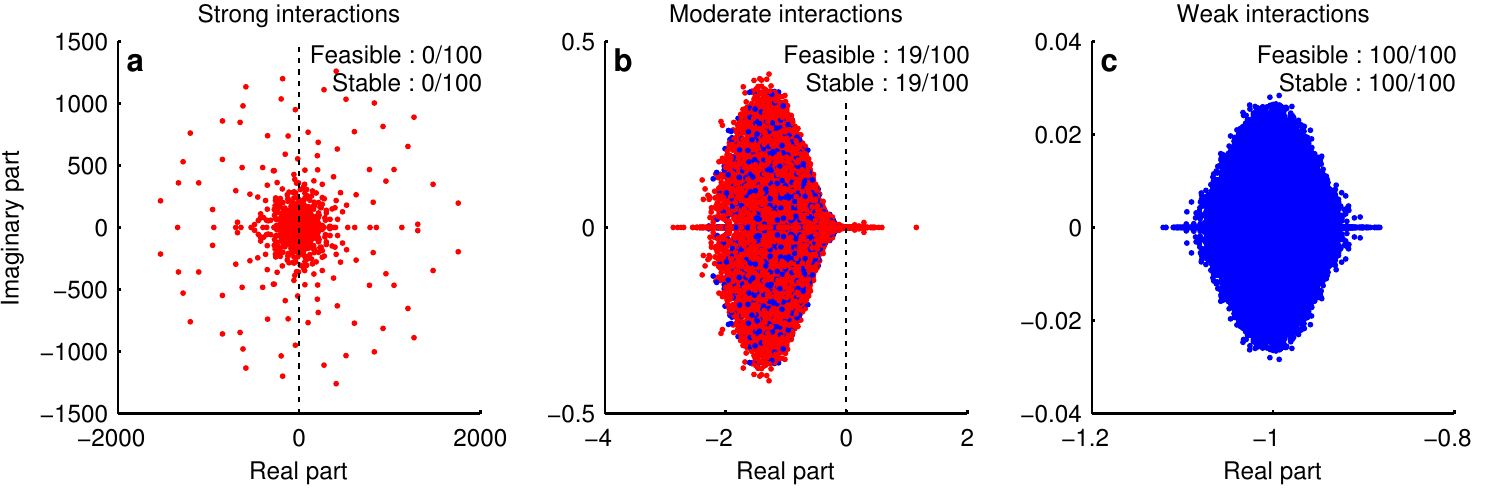}
\caption{Illustration that only feasible equilibria can be locally stable. For the three regimes of $\delta$, the eigenvalues of $J(x^*)$ are plotted in blue when $x^*$ is feasible, and red otherwise. (\textbf{a}) With strong interactions, equilibria are never feasible nor stable. (\textbf{b}) With moderate interactions,  feasible equilibria are stable. (\textbf{c}) With weak interactions, all equilibria are feasible and May's criterion is trivially fulfilled. The graphs show the eigenvalues of 100 realisations of the matrix $J$ for the random model with $S=150$, $\theta=-1$ and $\sigma=0.4$.}\label{fig:main1}
\end{figure}

\subsection{Feasibility in unstructured random web models}

For strong interactions ($0\leq\delta<1/2$), the probability of finding feasible equilibria $P_S$ goes abruptly to zero. Indeed, the variance of the abundances at equilibrium of each species grows with species richness. Then, the probability of having negative abundances converges to one, consequently $P_S \to 0$. Note also that this regime of interactions violates May's stability criterion when $S$ increases.

Moderate interactions ($\delta=1/2$) corresponds to the limiting case for which May's criterion can be asymptotically satisfied. Indeed, from equation (4), May's stability criterion is now independent from $S$ and $C$, and becomes $\sigma<|\theta|$. However, we show that in this case there exists asymptotically almost surely no feasible equilibria.
Indeed, we prove that the equilibrium abundances $x_i^*$  are asymptotically independent, identically distributed (i.i.d.)~Gaussian random variables
$$x^{*}_i \approx \mathcal{N}\lp \mu^*, \sigma^*\rp,\ i=1,\dots,S,$$
for some mean $\mu^*$ and some standard-deviation $\sigma^*$ (see Proposition~\ref{prop:asympt:solut:sublinear} in the Appendix). The probability of feasibility $P_S = P(x_i^*>0, \ \hbox{for all species}\ i)$ reduces thus by independence of the $x_i^*$ to $P_S = \prod_{i=1}^S P(x_i^*>0)$ in the large $S$ limit. Since every $x_i^*$ are identically normally distributed, $P(x_i^*>0)<1$ for every $1\leq i\leq S$. Thus $P_S$  decreases following a power law when the number of species $S$ becomes large, i.e.~$P_S \to 0$ as $S\to\infty$ (Fig.~\ref{fig:main2} (a). See Theorem~\ref{thm:feasibility:delta1/2} in the Appendix for complete details).

\begin{figure}[H!]
\includegraphics{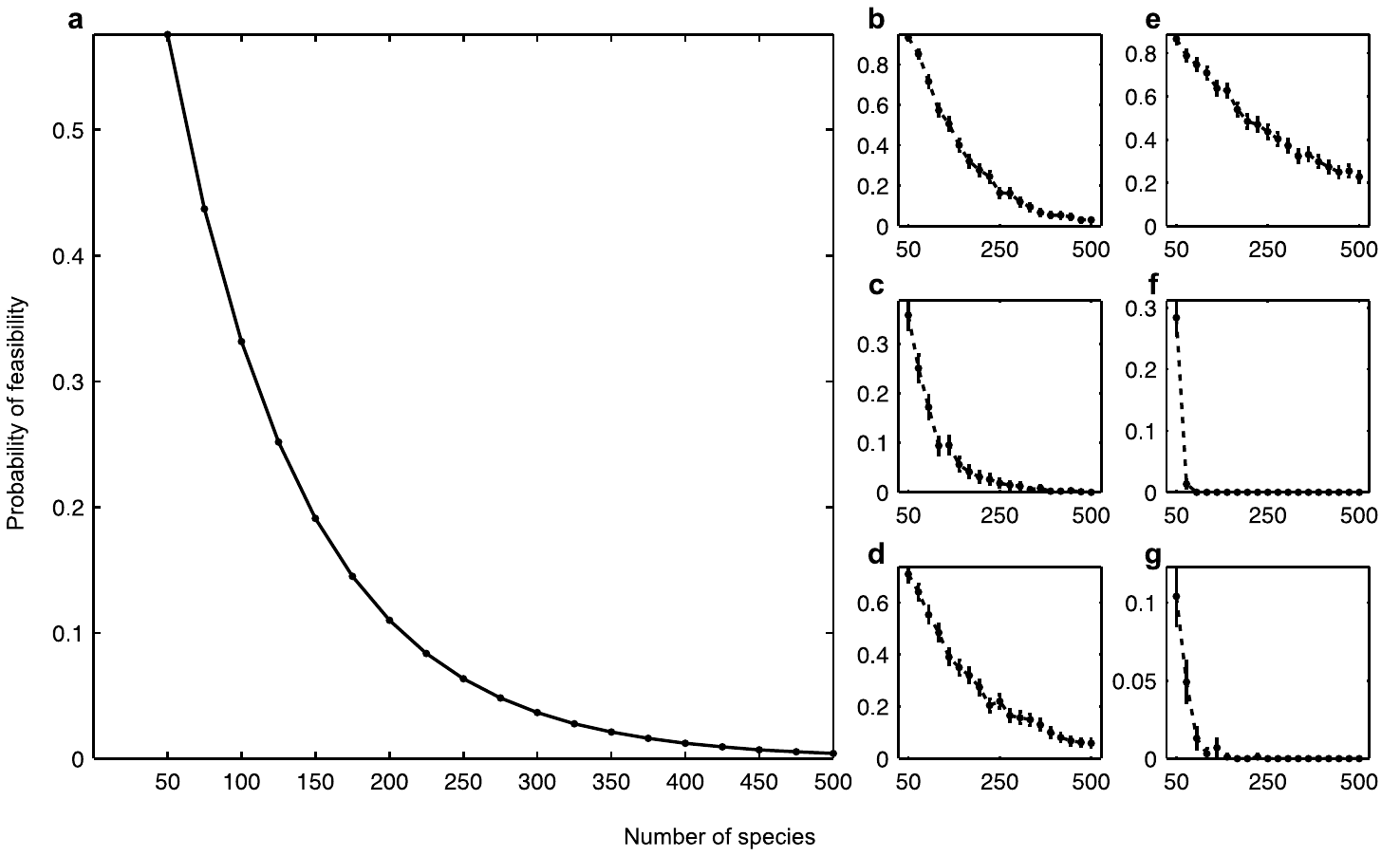}
\caption{Probability of feasibility for different models in the regime of moderate interaction strength. (\textbf{a}) Analytical prediction for the random model; (\textbf{b-g}) predictions (with 95\% confidence intervals) from 1000 simulations for random mutualistic networks, random competitive networks, random predator-prey networks, the cascade model~\cite{Cohen1990community}, the niche model~\cite{Williams2000simple}, and the nested-hierarchy model~\cite{Cattin2004}, respectively. We choose $a_{ij}\sim\mathcal{N}(0,\sigma)$ for any non-zero entry of the interaction matrix $A$ in the random model; in the other cases, a strictly positive interaction is randomly drawn from a folded normal distribution such that $a_{ij}\sim\left|\mathcal{N}(0,\sigma)\right|$, and a strictly negative interaction is sampled such that $a_{ij}\sim-\left|\mathcal{N}(0,\sigma)\right|$. The parameters are $C=0.25$, $\sigma=0.4$, and $\theta=-1$.}\label{fig:main2}
\end{figure}


We show numerically that
$P_S \to 0$ as $S\to\infty$   for unstructured random  models for  competitive, mutualistic, and prey-predator networks (see Table~1, Fig.~\ref{fig:main2}, and the Appendix).\\

\begin{table}[h!]
\begin{center}
\renewcommand{\arraystretch}{1.5}
\begin{tabular}{c l l l l }
\hline
& \multirow{2}{*}{\textbf{Model}} 		& \textbf{Moderate interactions}		& $\quad$ 	& \textbf{Weak interactions} \\
&					& \multicolumn{1}{c}{($\delta=\frac{1}{2}$)}			&		&  \multicolumn{1}{c}{($\frac{1}{2}<\delta\leq1$)}\\
\hline
\multirow{4}{*}{\begin{sideways} Unstructured~ \end{sideways}} 
& Random 		& $P_S\to 0$ (Thm.~\ref{thm:feasibility:delta1/2})	& $\quad$  	& $P_S\to0$ or $1$ (Thm.~\ref{thm:feasibility:delta1})\\

& Competition 		& $P_S\to 0$								& $\quad$ 	& $P_S\to0$ or $1$ (Cor.~\ref{cor:feasibility:delta1:mutPred})\\

& Mutualism 		& $P_S\to 0$								& $\quad$ 	& $P_S\to0$ or $1$ (Cor.~\ref{cor:feasibility:delta1:mutPred})\\

& Predation 			& $P_S\to 0$								& $\quad$ 	& $P_S\to0$ or $1$ (Cor.~\ref{cor:feasibility:delta1:pred})\\
\hline
\multirow{4}{*}{\begin{sideways} Structured~ \end{sideways}} & Cascade 			& $P_S\to 0$								& $\quad$ 	& $P_S\to0$ or $1$\\
& \multirow{2}{*}{Niche} 				& \multirow{2}{*}{$P_S\to 0$}			& \multirow{2}{*}{$\quad$	} 	&   $x^*$ not deterministic    \\[-0.75em]
&								&								&						&   and $P_S\to0$ or $1$    \\

& \multirow{2}{*}{Nested hierarchy}		& \multirow{2}{*}{$P_S\to 0$}			& \multirow{2}{*}{$\quad$} 	&   $x^*$ not deterministic   \\[-0.75em]
&								&								&						&   and $P_S\to0$ or $1$ \\
\hline
\end{tabular}
\caption{Summary of the different results presented on $P_S$ for $S\to\infty$. $P_S$ converges towards 0 as $S\to\infty$ for moderate interactions (and with the mean structural vector). For weak interactions, $P_S\to 0$ or $P_S\to 1$, depending on the parameters. The equilibrium $x^*$ is deterministic in the unstructured case and for the cascade model. $x^*$ has a non-trivial distribution and is feasible with positive probability for the niche and nested-hierarchy models. The results are analytical if specified in brackets and otherwise obtained by simulations (see sections~\ref{sec:simu} and~\ref{sec:simu:weak}).}\label{table:summary:results}
\end{center}
\end{table}

Consider now weak interactions among species ($1/2<\delta\leq1$). When the growth rate vector is set to the mean-structural vector, then $P_S\to1$ (Fig.~\ref{fig:main3} (a)). Indeed, we prove in the Appendix that the steady state $x^*$ converges towards the feasible deterministic vector $\mathbbm{1}$ which contains all 1s as values. In other words, any $x_i^*$ converges to 1 with a variance converging to 0 and is thus trivially positive. This implies $P_S \to 1$.


\begin{figure}[H!]
\includegraphics{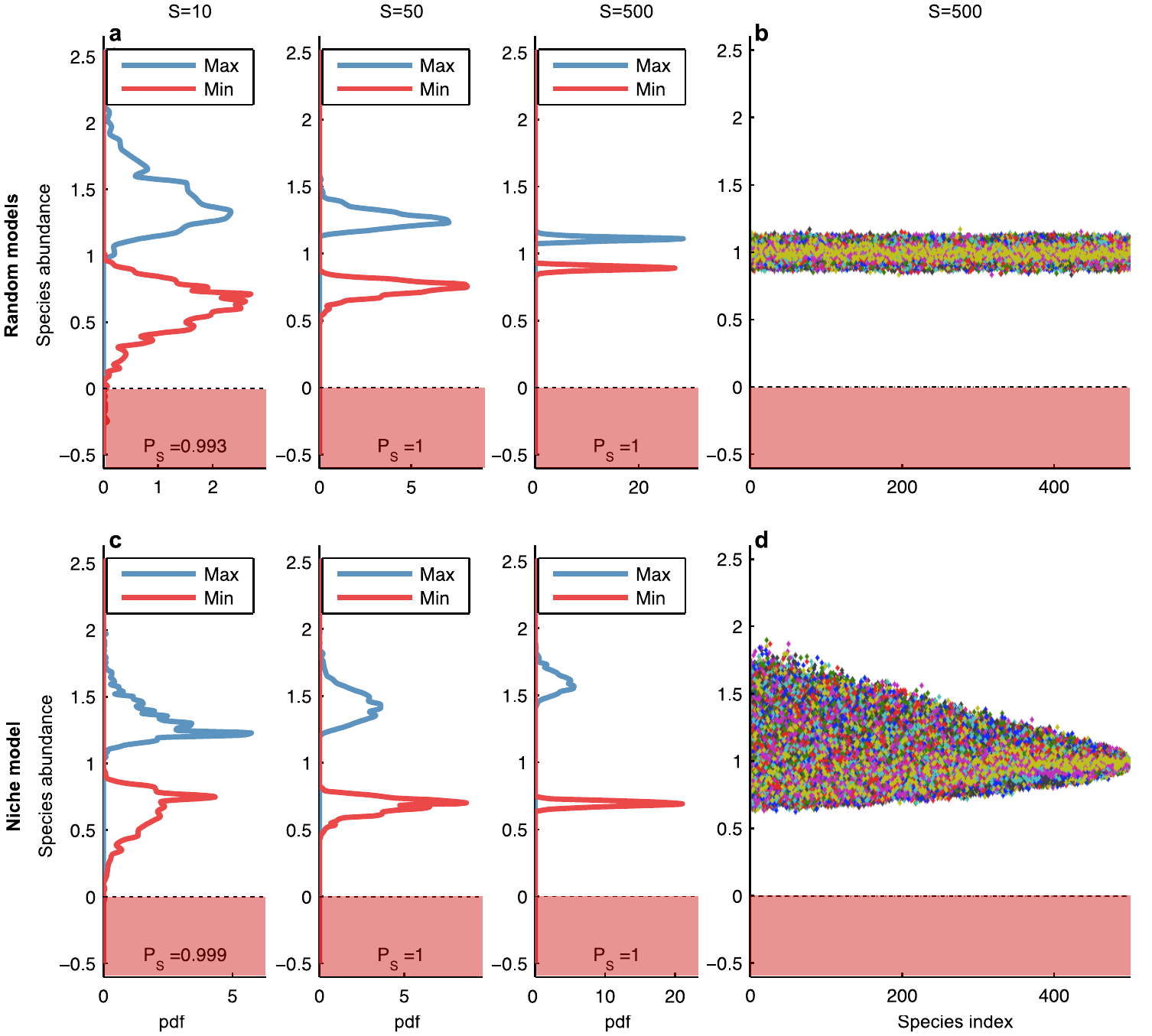}

\caption{Convergence of the equilibrium under weak interactions. For $S=10$, 50, and 500 the distribution of $\textrm{max}(x^*)$, in blue, and $\textrm{min}(x^*)$, in red, are represented for 1000 simulations for (\textbf{a}) random networks, and (\textbf{c}) the niche model. For $S=500$, all simulations are plotted to illustrate the support of the distribution of $x^*$: (\textbf{b}) in random networks, the regime of weak interactions yields asymptotically deterministic equilibria; (\textbf{d}) in the niche model, the equilibria are no longer deterministic and converge towards a distribution with a non-trivial support. Species range from top predator (index $i=1$) to basal ($i=S$); in (\textbf{d}), note that the size of the support depends on the index of the species.}\label{fig:main3}
\end{figure}

We test now different growth rates. Considering a general model where $\mu_{A}$ denotes the mean interaction strength between interacting species before normalisation, we find that $P_S \to 1$ when $r_i>\bar r\,\mu_{A}/(\mu_{A}+\theta)$ for all $i$ and where $\bar r$ is the sample mean of the intrinsic growth rates (see Theorem~\ref{thm:feasibility:delta1} in the Appendix), otherwise, $P_S \to 0$. For example, in May's framework ($\mu_{A}=0$), $P_S \to 1$ when all growth rates are positive. Similar results will be found in the Appendix concerning other types of unstructured models. These results rely on the fact that, when the growth rate vector has bounded components, the variance of any $x_i^*$ goes towards zero. As such, $x^*$ converges towards a deterministic vector that is feasible or not depending on the chosen parameters values. Thus $P_S\to1$ or $0$, depending on $r_i$, $\theta$, and $\mu_{A}$.

At all, with weak interactions and for any choice of growth rates, all realisations of the random matrix $A$ lead to a limiting deterministic equilibrium. Thus, abundance variability disappears irrespective of particular realisations of $A$.


When studying the three other types of unstructured models (mutualistic, competitive, and predator-prey networks), we find the same results as above for strong, moderate or weak interactions (Table~1). 

\subsection{Feasibility in predator-prey structured models.}

 May's paradox that complexity decreases stability in mathematical models led to the exploration of the effects of topological structure on systems' dynamics. Several works have shown that particular architectures can stabilise ecological networks~\cite{DeAngelis1975,Dunne2002,Okuyama2008,Thebault2010}. Nevertheless, the feasibility of different network architectures for different interaction regimes has not been explored.

Consider for example Cohen's cascade model of trophic networks~\cite{Cohen1990community}, where species are hierarchically ordered so that they feed only on lower indexed species, generating sub-triangular matrices with random structure. For the three regimes of $\delta$, simulations yield similar behaviour of $P_{S}$ as for unstructured models (Table 1).

We now test two other food-web models that more accurately capture the structure of trophic interactions in real systems, the niche~\cite{Williams2000simple}  and the nested-hierarchy models~\cite{Cattin2004}. The former generates purely interval food webs (predators consume all species in a niche interval) while the latter is based on evolutionary processes and relaxes this constraint. The results are qualitatively similar to the ones of unstructured random models for all types of interactions. However, two interesting features emerge. First, the mean structural vector now contains negative growth rates, which is the essence of predator-prey systems where predators will die in the absence of prey (this feature also appears in the cascade model, but not in the unstructured food-web model, see Table~2 in Appendix). Second, with weak interactions, each $x_i^*$ no longer converges to a deterministic limiting vector, but instead to a non-trivial random variable having a distribution of finite support and positive standard-deviation. Thus, contrary to the unstructured and cascade models, a particular realisation of the interaction matrix $A$ will influence the equilibrium values. Indeed, these highly structured networks and their related mean structural growth rates induce correlations among the abundances at equilibrium, which prevent a convergence to a deterministic value. We find that the size of the support depends on the hierarchical position of the species (Fig.~\ref{fig:main3} (c-d) and Fig.~\ref{fig:delta1:structured}). We performed additional simulations based on empirical food webs, and find similar results (Figs.~\ref{fig:delta1:structured:histogramNestedNiche}, \ref{fig:nicheNested:rRandom:weak:variances} and \ref{fig:empirical:meanStructuralVector}).

Since the mean structural vector is constructed in the purpose of maximizing the probability of feasibility, we have also considered random Gaussian growth rates of positive mean to complete our analysis.  For the niche and nested-hierarchy models, the standard deviations $\sigma^*_i$ of the abundances at  equilibrium are ordered according to the trophic position of the species within the food web (from basal autotrophs, to intermediate species that are predator and prey, and to top predators). We find $\sigma^*_{\rm basal}<\sigma^*_{\rm intermediate}<\sigma^*_{\rm top}$, a result similar to that obtained with the structural vector (Figs.~\ref{fig:delta1:structured:histogramNestedNiche} and \ref{fig:nicheNested:rRandom:weak}). This effect, also visible in empirical food-webs (Figs.~\ref{fig:nicheNested:rRandom:weak:variances} and \ref{fig:empirical:meanStructuralVector}), disappears in two particular cases: 1) for zero mean Gaussian random growth rates; 2) when the growth rate standard-deviation becomes large. In both cases, the equilibria then become independent from species trophic position (Figs.~\ref{fig:nicheNested:rRandom:weak:variances} and \ref{fig:empirical:randomGrowthVector}).

\section{Discussion}

Our results can be summarised as follows: for the cases compatible with May's~\cite{May1972} and Allesina's~\cite{Allesina2012,Allesina2015} framework where $\delta = 1/2$ (moderate interactions), there exist almost surely no feasible equilibria in species-rich systems for any of the models considered. For weak interactions in randomly structured systems, feasibility is granted; also, the stability criteria are trivially asymptotically satisfied. Therefore, in such situations, the criteria are basically void of information. However, including realistic structure in trophic systems, we find that the results are no longer trivial. 

As a general message, future works tackling dynamical stability must be preceded by a feasibility analysis. In our Lotka-Volterra framework, this involves knowledge of the structure and intensity of interactions, and of the growth rates. Feasibility can then be easily evaluated. Our results indicate that sensible information on the dynamics of natural systems will be obtained by concentrating on the interplay between ecological structure, strengths of interaction weights, and how interaction strengths are distributed in their architecture. 

There are still many other aspects that should be taken into account to reach general conclusions about feasibility and stability that better fit natural systems. They are very different from unstructured systems, and many constraints affect their architecture and dynamics. A first question simply concerns system size in term of species richness $S$. Most described systems are middle sized, but large webs do exist~\cite{OGorman2010,Twomey2012}, so that mathematical study focusing on large $S$ limits can be justified. An intriguing aspect here is how smaller and well-definable webs are dynamically embedded in larger species-rich systems. For example, it has been found that top predators have a stabilising effect on food webs by coupling fast and slow energy channels in natural aquatic systems~\cite{Rooney2006}. This question could be framed in a general theory for the ``inverse pyramid of habitat", which explores the consequences of the ubiquitous (but often overlooked) observation that species located at higher trophic levels tend to have larger home ranges~\cite{Holt2005}. Relevant to system size lies also the question of how network nodes are defined: taxonomic resolution is often heterogeneous in observed food webs, with basal species being more often pooled (e.g., phytoplankton being considered as a single species). Secondly,  species richness $S$ and connectance $C$ are usually treated as independent parameters, while it has been described that $C$ decreases with $S$ in different systems~\cite{BanasekRichter2009}, thus restricting the parameter space to be explored. Finally, even if topology per se has been suggested to play a minor role in system stability~\cite{Johnson2014,James2015}, we showed that network architecture does play a non-trivial role on equilibrium abundances. Layer architectures as those produced by the niche and nested-hierarchy models yield ecologically sensible results; exploring other structural models~\cite{Drossel2004,Allesina2008,Rossberg2008} and especially natural architectures is necessary. 

Apart from these structural considerations, ecologists are confronted to a more elusive issue, the estimation of demographic parameters. To make sense, the investigation of feasibility and stability should be based on meaningful parameter values. The growth rate vector $r$ that appears in Lotka-Volterra dynamics is mostly unknown or very difficult to obtain experimentally. One can either keep this vector as an intrinsic free parameter or assume that such growth rates change with interaction weights, as in~\cite{James2015} or with our mean structural vector. The estimation of interaction parameters is also far from trivial~\cite{Laska1998,Berlow2004,Arditi2016}. Experimental studies (see, e.g.,~\cite{Neutel2014,James2015,Jacquet2013}) choose interaction weights indirectly, according to particular methods like predator-prey mass ratio models, biomass flux, or the Ecopath method (see, e.g.,~\cite{Emmerson2004B,Moore1996,Christensen1992}). Additionally, one unresolved question is how intraspecific competition scales with average interaction strength, which is perhaps the most important ingredient for stability (e.g.,~\cite{May1972,Neutel2007,Allesina2012,Neutel2014}), and also plays a role in feasibility (see Proposition~\ref{prop:asympt:solut:sublinear} in Appendix). Also, if the magnitude of intraspecific competition can be assumed to be independent from $S$, this is likely not the case with interspecific interactions (captured by our parameter $\delta$), which has be shown to play a key role in mutualistic networks~\cite{Rohr2013}. This question is related to the study of weak interactions, which have been shown to promote stability~\cite{McCann1998, Neutel2014, Jacquet2013,Gellner2016} and here feasibility. Finally, one must consider that these parameters do scale with body size~\cite{Brose2006}, which defines a constraint between parameter values and network structure, as large-bodied species typically populate higher trophic levels. Interestingly, this allometric relationship may underly the result that stability depends on the link between trophic position and interaction strength~\cite{James2015}, and our finding that abundance variability scales with trophic position. 

For mathematical tractability, many studies on the dynamics of food webs, including ours, rely on Lotka-Volterra models with mass-action type interactions (the so-called Holling type I functional response). However, investigations on the number of prey eaten per predator and unit time in simple systems do not support this modelling assumption. There exist a vast literature on the subject and the choice of a sensible functional response is still debated~\cite{Arditi2012Book,Abrams2015}. Also, it has been shown that ``foraging adaptation" by predators affects the dynamics of food webs~\cite{Kondoh2003}. Such non-linearities in species interactions have been explored from the stability point of view in systems adjusted to be feasible~\cite{Gross2004,Gross2006,Gross2009}. However, since the precise way interactions are modelled has non-trivial consequences on system dynamics, the relationship between feasibility and complexity should also be considered in such situations. 

May's approach considered systems with any kind of interaction and not only trophic ones. An articulate answer on the relationship between stability and complexity should obviously incorporate all types of interactions relevant for system dynamics (including the effects of ecosystem engineers~\cite{Jones1994,Odling2003}). This question has been the focus of recent theoretical developments~\cite{Arditi2005,Fontaine2011}, and empirical studies specifically addressing this question start to emerge~\cite{Pocock2012}.   

Science faces increasingly complex situations where high-dimensional parameters occur. A good example is statistical mechanics, for which relevant results have been obtained for highly complex systems without having a precise knowledge of the microscopic details of a model. Our results like many cited here follow the same line. However, due to the complexity of natural systems, the gap between theoretical and empirical investigations is likely to remain open. To attain a consensus, the feasibility of the model systems should not be forgotten~\cite{vandermeer1970,Logofet1993Book,Bastolla2005,Nattrass2012,Rohr2013}, and more effort must be devoted to obtain empirical and experimental time series as necessary benchmarks.

\appendix


\section{Model and methods}\label{sec:model}

\subsection{Feasibility and stability in dynamical systems}

Consider dynamics driven by a general system of first order autonomous ordinary differential equations
$$\frac{\textrm{d}x_i(t)}{\textrm{d}t} = f_i\lp x(t) \rp, \quad i \in {\cal S} = \left\{1,\dots,S \right\}, \quad t\geq0,$$
with $f=(f_1,\ldots,f_S):\R^S\to\R^S$ a differentiable function and $x(t) = \lp x_1(t),\dots,x_S(t) \rp\in\R^S$. Solutions $x(t)$ of this system evolve in time and their trajectory are rarely describable as such, especially in high dimensions. However, there are some important features that can be exhibited without solving the system. One of the most instructive is the existence of \emph{equilibria}, which are points $x^*$ in the phase space $\R^S$ where the solution of the system does not vary in time, that is $f_i(x^*) = 0$ for all $i \in {\cal S}$. A trajectory reaching an equilibrium remains indefinitely at this point. If $x^{*}$ belongs to the admissibility domain of the model (i.e., all $x^*_i>0$ for ecological systems), then it is called \emph{feasible}. Unfeasible equilibria are well-defined in the mathematical sense, but have to be rejected in the perspective of the application.

If small perturbations of the trajectory in the neighbourhood of the equilibrium $x^*$ fade over time, so that the system tends to restore the equilibrium, then  $x^*$ is said to be \emph{locally stable}. This property is related to the derivatives of the function $f$. More specifically, an equilibrium $x^*$ is \emph{linearly stable} when the Jacobian matrix $J(x^*)=\lp\frac{\partial f_i }{\partial x_j}\,(x^{*})\rp$ evaluated at the equilibrium has only eigenvalues with negative real parts (see~\cite{May2001Book},~\cite{Tabor1989Book}  or~\cite{Logofet2005}).

\subsection{Lotka-Volterra model}
In the context of ecological networks, the dynamics of interacting species is commonly described by the Lotka-Volterra equations (see e.g.~\cite{lotka1925,volterra1926,Gillman1997}). The function $f$ can be written as
\begin{equation}\label{LVWM}
\frac{{\rm d}x_i}{{\rm d}t} =  x_i \lp r_i+\theta x_i+\sum_{j=1}^S \frac{a_{ij}}{(CS)^{\delta}}\,x_j\rp = f_i(x),\ i\in {\cal S}=\{1,\cdots,S\},
\end{equation}
where $x_i$ denotes the abundance of species $i$,  $\theta < 0$ is a friction coefficient (intraspecific competition, assumed to be the same for all species), $r_{i}$ describes the intrinsic growth rate of $i$ and the interaction coefficients $a_{ij}$ stand for the per capita effect of species $j$ on species $i$. The connectance $C$ denotes the proportion of links present in the network with respect to the number of all possible links $S(S-1)$ or $S^2$, depending on the model. The product $CS$ is a measure of the complexity of the system, and is the average number of links between two species. The term $(CS)^{\delta}$ is introduced as a normalisation of the interactions strength, with $\delta\geq0$ a parameter controlling this renormalisation (see 2.3.3 below). In matrix form, the system writes
\begin{equation}\label{eq:LVWM:matrix}
\frac{{\rm d}x}{{\rm d}t} =  x\circ\lp r+\lp \theta\, {I} + \frac{A}{(CS)^{\delta}}\rp x\rp,
\end{equation}
where $I$ denotes the $S\times S$ identity matrix, $A=(a_{ij})_{1\leq i,j\leq S}$ is the interaction matrix,   $r=(r_1,\ldots,r_S)$ is the growth rates vector and $\circ$ denotes the Hadamard product, that is $x\circ y=(x_1y_1,\ldots,x_Sy_S)$.

The admissibility domain of this model is the positive orthant, that is $x\in\R^{S}$ with $x_{i}>0$ for all $i$. 
A feasible equilibrium must then satisfy $x^{*}_i>0$ for all species $i\in\mathcal{S}$ and 
\begin{equation}\label{eq:def:x*}
x^{*} =\lp-\theta\, {I}-\frac{A}{(CS)^\delta}\rp ^{-1} r,
\end{equation} 
so that the right hand side of~\eqref{eq:LVWM:matrix} equals zero, assuming this inverse matrix exists. Such an equilibrium is locally stable if the Jacobian matrix (also named community matrix in this context)
\begin{equation}\label{eq:def:J}
J\lp x^{*}\rp={\rm diag}\lp x^{*}\rp\lp\theta\, {I}+\frac{A}{(CS)^{\delta}}\rp
\end{equation}
has all eigenvalues with negative real part. As one can see in the previous equation, this matrix depends explicitly on the equilibrium abundances $x^*$  (${\rm diag}\lp x^{*}\rp$ denotes the matrix with $x^{*}$ in the diagonal and $0$ everywhere else) and it should not be confused with the interaction matrix $A$.\\

Several ecological models have been developed with the goal of understanding network structure and its effects on system dynamics~\cite{Rossberg2013book,May1972,Cohen1990community,Williams2000simple,Cattin2004}. Those models generate essentially different types of network structures, and consequently particular interaction matrices $A$. In order to exhibit the particular effects of a structure on the dynamics of systems, the coefficients $a_{ij}$ can be randomised. This allows the exploration of some behaviour of the models and the detection of their key features. A direct consequence in considering random interactions is that feasibility and stability have to be considered from a probabilistic point of view. We define therefore the \emph{probability of feasibility} of an equilibrium as
\begin{equation}\label{eq:def:p}
P_{S} = \Prob\lp x_i^* > 0 \, \forall i \in {\cal S} \rp.
\end{equation}
The purpose of this work is to study this probability, which expresses the likelihood of a model to provide feasible equilibria, for which the stability can eventually be examined.

\subsection{Interactions}
The interaction matrix $A$ describes firstly who interacts with whom in the network (the structure), and secondly what is the type of these interactions: if $a_{ij}<0$ and $a_{ji}<0$, species $i$ and $j$ compete with each other; if  $a_{ij}>0$ and $a_{ji}>0$, their interaction is mutualistic and they both benefit from the presence of each other; if $a_{ij}>0$ and $a_{ji}<0$, then species $i$ preys upon species $j$; if $a_{ij}=0$, species $j$ has no direct effect on $i$.

The models of complex ecological networks that we consider can be divided into two main categories: unstructured and structured models. The former consist in webs for which the topological structure is completely free and random. This equates to consider networks based on Erd\H{o}s-R\'enyi graphs, which are constructed by adding randomly edges between nodes with the same probability $C$. Here, we consider four of those models depending on the type of interactions: random (as in May's formalism~\cite{May1972}), mutualistic, competitive and predator-prey. Structured models define stochastic rules for the construction of the underlying graphs so that not all graphs are equally likely. This is the case in the three models studied here: the cascade~\cite{Cohen1990community}, the niche~\cite{Williams2000simple} and the nested-hierarchy model~\cite{Cattin2004}. Since these three models were built to represent food webs, only predator-prey interactions are considered. Fig.~\ref{fig:networks} provides an example of network samples for the unstructured predator-prey model and for the structured cascade and niche model.

\subsubsection{Interaction models on unstructured networks}\label{sec:interactionsModels}

\textbf{Random network}. This model introduced by May in 1972~\cite{May1972} considers both the structure and the type of interactions as fully random. The interactions are independent and identically distributed (i.i.d.)~centred random variables with common standard deviation $\sigma$ and can thus be either positive or negative,  corresponding to a mixture of competition, mutualism and predation. The coefficients equal $0$ independently of their position, with a fixed probability $1-C$. Note that $a_{ij}=0$ does not necessary imply $a_{ji}=0$, so that amensalism $(0,-)$ and commensalism $(0,+)$ interactions are also possible. In this model, the connectance is given by $C=L/S^2$, where $L$ is the number of links in the network.

\medskip
\noindent
\textbf{Mutualistic network}. This model considers only mutualistic interactions $(+,+)$. The diagonal coefficients are $a_{ii}=0$ and any pair of species $(i,j)$, $i< j$ is linked with probability $C$. If the pair $(i,j)$ is not linked, then $a_{ij}=a_{ji}=0$, otherwise the interactions strength $a_{ij}$ and $a_{ji}$ are positive i.i.d.~random variables. The elements of $A$ are almost all i.i.d., except for the pairs $a_{ij}$ and $a_{ji}$, which are only independent conditionally on the fact that they are different from $0$. In this model, the connectance is given by $C=L/S(S-1)$.

\medskip
\noindent
\textbf{Competitive network}.  This model is built in the exact same manner as the previous one, except that the interactions are nonpositive random variables in such a way that there are only competitive interactions $(-,-)$.

\medskip
\noindent
\textbf{Predator-prey network}. This is an example of an unstructured model for \emph{predation}. The diagonal coefficients are $a_{ii}=0$ and any pair of species $(i,j)$, $i< j$ is linked with probability $C$. If the pair $(i,j)$ is not linked, then $a_{ij}=a_{ji}=0$, otherwise the interactions strength are independently sampled, with the restriction that $\sign(a_{ij}) = -\sign(a_{ji})$. This results in a sign antisymmetric interaction matrix whose elements are identically distributed and almost all independent, except for the pairs $(a_{ij}, a_{ji})$, which are correlated by their sign. The connectance in this model is given by $C=L/S(S-1)$.

\subsubsection{Interaction models on structured networks}

\textbf{Cascade model}. This model introduced by Cohen et al.~\cite{Cohen1990community} is an example of structured food webs. The interactions are of predator-prey type. The species are ordered on a line and they can feed only on species with a strictly lower rank, excluding any loop in the network. The resulting interaction matrix $A$ has an upper diagonal with nonnegative i.i.d.~entries and a lower diagonal with nonpositive i.i.d.~entries. An upper diagonal entry $a_{ij}, i<j$ can be zero with probability $1-C$ and in this case $a_{ji}=a_{ij}=0$. The diagonal of $A$ consists in zeros and the connectance is $C=L/S(S-1)$.

\medskip
\noindent
\textbf{Niche model}. In 2000, Williams and Martinez~\cite{Williams2000simple} proposed a new model that permits loops (including cannibalistic loops), and that generates purely interval food webs. Each species is randomly assigned three numbers: a niche value, a range radius proportional to the niche value and a range centre. Species feed on all species whose niche value falls into their range. The species with the smallest niche value has a range $0$ to ensure the presence of, at least, one basal species. This defines the so-called adjacency matrix of the network: $\alpha$ with $\alpha_{ij}=1$ if $i$ preys upon $j$ and $\alpha_{ij}=0$ otherwise. The parameters of the probability distributions of the niche and the range are chosen in order to obtain an average connectance $C/2$ in the matrix $\alpha$. For the construction of the corresponding interaction matrix, one multiplies the entries of $\alpha$  with positive i.i.d.~values and the entries of $\alpha^{t}$ with negative i.i.d.~values, and finally one adds up the two corresponding matrices. This results in a interaction matrix $A$ with upper diagonal with mostly (but not exclusively) nonnegative entries, and lower diagonal entries mostly nonpositive. The average connectance of $A$ is $C=L/S^{2}$.  Notice that this construction is slightly different from the one proposed by Allesina and Tang in~\cite{Allesina2012}, since if species $i$ preys on $j$ and vice-versa, those interactions do not cancel out but are added in $A$. The combination of both interactions can either be mutualistic or competitive, or can remain a prey-predator interaction.

\medskip
\noindent
\textbf{Nested-hierarchy model}. Cattin et al.~\cite{Cattin2004} proposed a model that tries to implicitly take evolution into account and relaxes the intervality of the diets of the niche model. The model creates a nested hierarchy between species driven by phylogenetic constraints. Two numbers are randomly assigned to each species: a niche value and a number of preys that depends on the niche value. The species with the smallest niche value has no prey. After having reordered the species according to their niche value, preys are assigned to species starting with the species that has the smallest niche value, according to the following rule: for a consumer $i$, one chooses randomly a prey $j$ with smaller niche value; then, one considers a pool of preys consisting of all preys consumed by other consumers of $j$; consumer $i$ will then be assigned preys among this pool; if the pool is too small, choose another pool, and if this is still not possible, choose randomly a new prey. This algorithm builds the adjacency matrix of the network $\alpha$, whose connectance is set on average to $C/2$ by tuning the parameters of the probability distribution of the niche values and the number of links of each species. The interaction matrix $A$ is then constructed in the exact same way as for the niche model and the connectance is $C=L/S^{2}$.

\subsubsection{Interaction strength}
Intensity of interactions plays a key role for the local stability of networks~\cite{McCann1998,Emmerson2004}. As already suggested by May's criterion~\cite{May1972}, in the framework of the random model, interactions strength should decrease at least at rate $\frac{1}{\sqrt{CS}}$ in order to preserve stability when complexity $CS$ increases. This consideration motivates the introduction of the normalising parameter $\delta\geq0$ in~\eqref{eq:LVWM:matrix} for the study of ecological feasibility of equilibria. Three regimes naturally emerge: strong interactions ($\delta=0$), moderate interactions ($\delta=1/2$) and weak interactions ($\delta=1$). A useful approach for the interpretation of these different regimes is the weight of a node in the network. Consider for example the random model. As the network is unstructured, all species have on average the same role, and the expected total weight of their interactions is
$$W_{i}=\frac{1}{(CS)^{\delta}}\,\sum_{j=1}^{S}\E\lp|a_{ij}|\rp=\frac{1}{(CS)^{\delta}} \,S\,\E\lp|a_{11}|\rp=(CS)^{1-\delta}\, \E\lp|a_{11}|\mid a_{11}\neq0\rp$$
since $a_{ij}$ are i.i.d.~by assumption. One sees that for $\delta=1$, $W_{i}$ is constant and does not depend on the complexity. For $\delta=\frac{1}{2}$, the total weight is proportional to the square root of the complexity, and for $\delta=0$, $W_{i}$ depends linearly on the complexity. From the mathematical point of view, these three regimes correspond to three normalisation of the sum of i.i.d.~random variables. Indeed, let $X_{1},X_{2},\ldots$ be a sequence of i.i.d.~centred random variables. The rescaled sum $\frac{1}{n^{\delta}}\sum_{i=1}^{n} X_{i}$ converges almost surely to zero if $\frac{1}{2}<\delta\leq1$ (law of large numbers). If $\delta=\frac{1}{2}$, the central limit theorem states that $\frac{1}{n^{\delta}}\sum_{i=1}^{n} X_{i}$ converges in distribution to a normal distributed random variable with mean zero and finite standard deviation. Proposition~\ref{prop:asympt:solut:sublinear} and Theorem~\ref{thm:feasibility:delta1} below show the link between these probabilistic regimes and the nature of the equilibrium as the size of the network goes to infinity, depending on the parameter $\delta$. While $x^{*}$ converges to a deterministic value when interactions are weak ($\delta=1$), it obeys a central limit theorem and converges to a well-defined random variable when the interactions are moderate ($\delta=\frac{1}{2}$). Finally, if $0\leq\delta<\frac{1}{2}$, the rescaled sum does not converge and its standard deviation explodes.  Consequently, the case of strong interactions ($\delta=0$) is not studied, as the mathematical problem is asymptotically ill-posed.

\subsection{Growth rates}

The parameter $r$ is of particular interest when studying feasibility as illustrated in~\cite{Rohr2013}. For any fixed realisation of the random interaction matrix $A$, it is always possible to tune the parameter $r$ in order to get any desired equilibrium. Choose any vector $u\in\R^S$ and set 
$$r(u)= \lp- \theta I - \frac{A}{(CS)^{\delta}}  \rp u$$ 
as a structural growth rates vector. A glance at~\eqref{eq:def:x*} shows immediately that the resulting equilibrium is $x^*=u$. Consider now the all-ones vector $\mathbbm{1}$ in $\R^S$. This direction $\mathbbm{1}$ is in a sense the most feasible, since it is as far as possible from the boundaries of the admissibility domain (i.e.~the positive orthant). Setting the growth rates vector to the so-called structural vector $r(\mathbbm{1})$ leads the equilibrium deterministically to $x^*=\mathbbm{1}$. 
Interestingly, the authors of \cite{James2015} 
studied stability using non-random interaction weights. They used $r(u)$ with 
$u=(-d/\theta)\mathbbm{1}$, for some positive constant $d>0$,
to avoid transcritical bifurcations where the abundances of the steady state $x_i^*$ become negative. They argue that with this particular choice, the only way in which a species can become extinct is via a degenerate bifurcation that makes a positive equilibrium unstable, causing the system to move suddenly to a different equilibrium point.

However, choosing the parameter $r$ in such a way, i.e., contingently to $A$, is clearly uninformative as it erases the whole structure of the model, and nips the purpose of the randomisation in the bud. The parameter $r$ should thus not be set according to a particular realisation of $A$. This can be achieved with, for example, the \textit{mean structural vector} $v$ defined by
\begin{equation}\label{eq:def:rStructural}
v=\E \lp r(\mathbbm{1})\rp = \E \lp- \theta I  - \frac{A}{(CS)^{\delta}} \rp \cdot \mathbbm{1}.
\end{equation}
This vector is deterministic, therefore clearly independent of the randomness of $A$, and it provides equilibria that are, on average, the most feasible. Choosing this $v$ as growth rates vector allows then to study feasibility in the most favourable conditions. The mean structural vectors of the seven models of interest are given in Table~2.

\begin{table}
\begin{center}
\renewcommand{\arraystretch}{1.6}
\begin{tabular}{l c l l }
\hline
\textbf{Model} & \textbf{Mean interaction} & $\quad$ & \textbf{Mean structural vector} \\
\hline
Random 	&  									& $\quad$  	& $v_{i}= \left| \theta \right| - (CS)^{1-\delta}\, \mu_{A}$ \\

Competition 		& $\E(a_{ij}\mid a_{ij}\neq0) = \mu_{A}$ 			& $\quad$ 	& $v_{i}= \left| \theta \right| - \frac{C(S-1)}{(CS)^\delta}\, \mu_{A}$ \\

Mutualism 		&									& $\quad$ 	& $v_{i}= \left| \theta \right| - \frac{C(S-1)}{(CS)^\delta}\, \mu_{A}$ \\
\hline
Predation 			& \multirow{3}{*}{$\E(a_{ij}\mid a_{ij}>0) = \mu_{A_+} > 0$ and}	& $\quad$ 	& $v_{i}=\left| \theta \right| - (CS)^{1-\delta}\cdot \frac{\mu_{A_-} + \mu_{A_+}}{2}$\\

Cascade 			&  \multirow{3}{*}{$\E(a_{ji}\mid a_{ij}<0) = \mu_{A_-} < 0$\phantom{ and}}		& $\quad$ 	& $v_{i}=\left| \theta \right| - \frac{i-1}{(CS)^{\delta}}\, C \,\mu_{A_-} - \frac{S-i}{(CS)^{\delta}}\,C\, \mu_{A_+}$\\

Niche 			&									& $\quad$ 	& \textit{Simulated}\\
Nested hierarchy	&									& $\quad$ 	& \textit{Simulated}\\
\hline
\end{tabular}
\caption{Models of interactions and their corresponding mean structural vector $v$. Note that the form of $v$ is the same for the random, competition and mutualism models. In a competitive system, $\mu_{A}<0$ and in a mutualistic one, $\mu_{A}>0$. In all unstructured models (random, competition, mutualism, predation), the mean structural vector has the same direction as the all-ones vector, meaning that the structural growth rates are the same for all species. For the structured models (cascade, niche and nested hierarchy), species do have different roles in the network, while in the other models all species behave, on average, identically. The mean structural vector follows then a particular direction which is no more the all-ones vector. One sees that, for the cascade model, for appropriate choice of the parameters $\theta$, $\mu_{A_+}$ and $\mu_{A_-}$ leads to a vector $v$ which has negative entries for species with small indices $i$ (top predators) and positive entries for species with large indices (basal species). This type of growth rate vector is a characteristic of such networks~\cite{lotka1925,volterra1926}. For the niche and the nested hierarchy models, the algorithmic construction of the interaction matrix makes the formula of their respective $v$ very complicated. They are therefore obtained by Monte Carlo simulations. As for the cascade model, negative growth rates are observed.}\label{table:rs}
\end{center}
\end{table}

The analytical results of sections~\ref{sec:analytical:moderate} and~\ref{sec:analytical:weak} hold for more general growth rates vector than $v$. They allow to consider random vectors $r=(r_i)$ with i.i.d.~entries. In the case of unstructured models, $v$ is indeed a particular case of this type of vectors, as the entries are all identical and deterministic, therefore independent from each other. In the following, we always illustrate our results with the mean structural vector $v$, unless specified.


\section{Probability of feasibility}\label{sec:results}
We study analytically the probability of feasibility $P_S$ in the framework of the random model. For moderate interactions, $P_S$ decreases exponentially with the size of the system. For weak interactions, $P_S$ is asymptotically equal to one when the growth rate vector is set to $v$ (see Eq.~\eqref{eq:def:rStructural}). We show that for other choices of growth vector, if the interactions are weak, then $P_{S}$ either equals one or zero, depending on the value of the parameters. We provide estimations of $P_{S}$ by mean of Monte Carlo simulations for the other models and observe the same phenomenon. In the case of weak interactions, $P_{S}$ can be estimated analytically for all unstructured models. 


\subsection{Moderate interactions}\label{sec:resultsModerate}
We show that in the random model with moderate interactions, there exists asymptotically almost surely no feasible equilibrium of the system~\eqref{LVWM}. The lack of structure of this model brings independence of the entries of $x^*$, the biomasses at equilibrium. Moreover, the rate of normalisation $\delta=\frac{1}{2}$ makes these entries follow asymptotically a normal law. Thus, when the number of species becomes large, there is a smaller and smaller probability that all of them have positive equilibrium, as illustrated in Fig.~\ref{fig:sim:proba:persist} and Fig.~\ref{fig:main1}.

This phenomenon is a key feature of this model and does not depend on the choice of the intrinsic rates $r$. Indeed, we can show analytically that the probability of persistence at equilibrium goes towards zero for any choice of a random vector $r$  with i.i.d.~entries. This framework contains the particular case where the intrinsic rates are all equal, i.e.~the same direction as $v$,  the mean structural vector of the random model. 

We extend this study by estimating $P_S$ in the other unstructured and structured models by simulations. For each model, one samples independent realisations of the interaction matrix $A$ and sets the growth rates vector to the mean structural vector $v$ of the model. The equilibrium is found according to Equation~\eqref{eq:def:x*} and $P_S$ is estimated by the proportion at which the equilibrium is admissible. The simulation results are analogous to the random model in the sense that $P_S$ decreases exponentially towards zero with $S$. Note that we also performed simulations for the random model, whose results correspond nicely to the analytical prediction (see Fig.~\ref{fig:sim:proba:persist} (a)).

\subsubsection{Analytical results}\label{sec:analytical:moderate}
First, we express the law of $x_i^{*}$ for arbitrary large $S$. Our results ensue from the direct application, with some extensions, of Geman's work~\cite{Geman1982} on solution of random large systems. They hold under the following assumptions:
\begin{assump}\label{assum:delta1/2}
In the Lotka-Volterra model~\eqref{LVWM}, we assume that
\begin{enumerate}[(i)]
\item $\delta=\frac{1}{2}$;
\item the interaction matrix $A=(a_{ij})$ has i.i.d.~entries with common mean $\E(a_{11})=0$; 
\item the intrinsic growth rates vector $r = (r_i)$ has i.i.d.~entries;
\item $r$ and $A$ are independent;
\item the second moments of the laws of $A$ and $r$ satisfy $\frac{\E(r_1^2)\E(a_{11}^2)}{\vert\theta\vert^2}<\frac{1}{4}$;
\item there exists a constant $\kappa$ such that $\E(\vert a_{11}\vert^S)< S^{\kappa S}$, for all $S\ge 2$;
\item the matrix $(\theta {I} +A/\sqrt{CS})$, where ${I}$ denotes the $S\times S$ identity matrix, is nonsingular.
\end{enumerate}
\end{assump}
The assumptions (v) and (vi) on the second and higher moments of $A$ and $r$ are very natural and not restrictive, since they are satisfied by a wide collection of laws (normal, uniform, beta, gamma, lognormal,...) as long as the chosen standard deviation is not too large. 
Note also that the mean structural vector satisfies these assumptions.
The last condition allows us to consider that the solution to~\eqref{eq:def:x*} always exists.

\begin{prop}\label{prop:asympt:solut:sublinear}
Under Assumptions~\ref{assum:delta1/2}, the biomasses at equilibrium of the model~\eqref{LVWM} converge in law towards Gaussian random variables:
\begin{equation}\label{EquilibriaNormal}
x^{*}_i \xRightarrow[ ]{\mathcal{L}} \mathcal{N}\lp-\frac{\E\lp r_1\rp}{\theta},\ \frac{\Var(r_1)}{\theta^2}+\frac{\E\lp r_1^2\rp\sigma^2}{\theta^2\lp\theta^2-\sigma^2 \rp}\rp,\quad\text{ for all } i=1,\ldots, S.
\end{equation}
Moreover, for every fixed $1\leq k \leq S$, the collection $\left(x^{*}_1,...,x^{*}_k \right)$ has independent entries.
\end{prop}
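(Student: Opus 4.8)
The plan is to expand the equilibrium into a Neumann series, peel off a leading ``structural'' term, and treat the remainder with a central limit theorem, following and extending Geman~\cite{Geman1982}. Since $\delta=1/2$, put $B=\tfrac{1}{\theta\sqrt{CS}}A$; by~\eqref{eq:def:x*}, whenever $I+B$ is invertible (an event of probability tending to one, cf.\ Assumption~\ref{assum:delta1/2}(vii)),
$$x^{*}=-\frac{1}{\theta}(I+B)^{-1}r=-\frac{1}{\theta}\sum_{n\ge0}\frac{(-1)^{n}A^{n}r}{(\theta\sqrt{CS})^{n}},\qquad\text{so}\qquad x_{i}^{*}=-\frac{r_{i}}{\theta}+\sum_{n\ge1}c_{n}(A^{n}r)_{i},$$
with $c_{n}=-\theta^{-1}(-\theta\sqrt{CS})^{-n}$. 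The term $n=0$, namely $-r_{i}/\theta$, already produces the claimed mean $-\E(r_1)/\theta$ and the variance contribution $\Var(r_1)/\theta^{2}$. It remains to show that $\sum_{n\ge1}c_{n}(A^{n}r)_{i}$ adds, as $S\to\infty$, an \emph{independent} centred Gaussian of variance $\E(r_1^{2})\sigma^{2}/(\theta^{2}(\theta^{2}-\sigma^{2}))$, and that distinct coordinates decouple.

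First I would control the series tail uniformly in $S$. Expanding $(A^{n}r)_{i}$ as a sum over length-$n$ walks issued from $i$ and bounding $\E[((A^{n}r)_{i})^{2}]$ by the standard walk-pairing (trace--moment) combinatorics, Assumptions~\ref{assum:delta1/2}(v)--(vi) give, after the $(CS)^{-n}$ normalisation, an estimate $\E[(c_{n}(A^{n}r)_{i})^{2}]\le K\lambda^{2n}$ for constants $K<\infty$, $\lambda<1$ not depending on $S$; the higher-moment hypothesis~(vi) is what legitimises the combinatorial estimate for every $n$, and the factor $\tfrac14$ in~(v) provides exactly the room needed to pass from the spectral radius of $A/\sqrt{CS}$ to its (asymptotically twice as large) operator norm. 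Hence $\sum_{n>N}c_{n}(A^{n}r)_{i}$ is $O(\lambda^{N})$ in $L^{2}$ uniformly in $S$, and it suffices to analyse the truncations $x_{i}^{*,N}:=-r_{i}/\theta+\sum_{n=1}^{N}c_{n}(A^{n}r)_{i}$ and let $N\to\infty$ at the end.

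For a fixed $N$, $x_{i}^{*,N}$ is a polynomial of bounded degree in the i.i.d.\ entries of $A$ and of $r$, and I would prove $x_{i}^{*,N}\xRightarrow{\mathcal{L}}\mathcal{N}(-\E(r_1)/\theta,\sigma_{N}^{2})$ by the method of moments, checking that every centred moment converges to the matching Gaussian moment (the odd ones vanishing in the limit, since a partially matched product of independent centred factors always leaves a spare factor of size $O(S^{-1/2})$); this is the core CLT of Geman's analysis. The limit variance is $\sigma_{N}^{2}=\sum_{m,n=0}^{N}c_{m}c_{n}\lim_{S}\operatorname{Cov}((A^{m}r)_{i},(A^{n}r)_{i})$, and two combinatorial facts collapse it to a geometric series: in the fully random, sign-asymmetric model the off-diagonal covariances ($m\neq n$) vanish in the limit, because matching the $m+n$ \emph{distinct} matrix entries carried by two walks of unequal length forces index coincidences that each cost a power of $S$; while for $m=n\ge1$ only the ``walk equals its mate'' pairings survive at top order, so $\Var((A^{n}r)_{i})\sim\E(r_1^{2})\E(a_{11}^{2})^{n}S^{n}$ and $c_{n}^{2}\Var((A^{n}r)_{i})$ tends to $\theta^{-2}\E(r_1^{2})$ times the $n$-th power of $\sigma^{2}/\theta^{2}$ (with $\sigma$ as in~\eqref{EquilibriaNormal}). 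Summing over $n\ge1$ and adding the $n=0$ term $\Var(r_1)/\theta^{2}$ reproduces exactly the variance in~\eqref{EquilibriaNormal}; the uniform tail bound then lets $N\to\infty$.

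Finally, asymptotic independence of $(x_{1}^{*},\dots,x_{k}^{*})$ follows from the same machinery applied, via the Cram\'er--Wold device, to $\sum_{i=1}^{k}t_{i}x_{i}^{*}$: the method of moments again yields a Gaussian limit whose covariance matrix is diagonal, because $\operatorname{Cov}(x_{i}^{*},x_{i'}^{*})\to0$ for $i\neq i'$ --- the $r_{i}$ are independent by Assumption~\ref{assum:delta1/2}(iii), and the fluctuation parts involve walks issued from \emph{different} vertices, whose entry-matchings again force index coincidences and vanish after normalisation. A jointly Gaussian, pairwise uncorrelated limit is a product of independent Gaussians, which is the assertion. \emph{The main obstacle} lies in the second step: making the Neumann tail estimate uniform in $S$, equivalently obtaining a sharp enough bound on the operator norm of $A/\sqrt{CS}$ --- this is precisely what dictates the form of Assumptions~\ref{assum:delta1/2}(v)--(vi) --- and then checking that Geman's CLT, stated originally in a related but narrower setting, carries over to a random growth-rate vector $r$ independent of $A$ and to the joint law of several coordinates.
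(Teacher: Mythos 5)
Your proposal is correct and follows essentially the same route as the paper's proof: expand $x^{*}$ in a Neumann series, apply Geman's walk-pairing moment combinatorics to show the terms are asymptotically independent Gaussians, and sum the resulting geometric series to obtain the variance $\Var(r_1)/\theta^{2}+\E(r_1^{2})\sigma^{2}/(\theta^{2}(\theta^{2}-\sigma^{2}))$. You merely make explicit some steps the paper leaves implicit (the uniform-in-$S$ tail truncation, the vanishing of cross-covariances, and the Cram\'er--Wold argument for joint independence of finitely many coordinates), which is consistent with, not different from, the paper's argument.
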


\proof
The proof is a generalisation of a result given in \cite{Geman1982}. We will assume without loss of generality that $C=1$. Indeed, $\E(\frac{1}{\sqrt C}a_{ij})=0$ and $\Var(\frac{1}{\sqrt C}a_{ij}) = \Var(a_{ij}|a_{ij}\neq 0)$ so that in the case of moderate interactions the variance of the $a_{ij}$ is preserved when dividing any $a_{ij}$ by $\sqrt C$.  Consider the system (\ref{LVWM}),
$$-\theta x^{*} = r + \frac{1}{\sqrt S} A x^{*}.$$
First approach the solution of the system by a Neumann progression which is given by
$$-\theta x^{*} = r + \sum_{k=1}^{\infty}\lp\frac{A}{\sqrt S} \rp^k \lp-\frac{1}{\theta} \rp^k r. $$
As in \cite{Geman1982}, define
\begin{equation}\label{eq:def:alpha}
\alpha_{r,\theta}\lp i,k,S \rp = \lc \lp\frac{A}{\sqrt S} \rp^k \lp-\frac{1}{\theta} \rp^k r \rc_i = \lp-\frac{1}{\theta} \rp^k \lp \frac{1}{\sqrt S} \rp^k \sum_{l_1,...,l_k} a_{il_1}a_{l_1l_2}...a_{l_{k-1}l_k}r_{l_k},
\end{equation}
which is simply the $i$th component of $\lp\frac{A}{\sqrt S} \rp^k \lp-\frac{1}{\theta} \rp^k r$. 
We compute the joint moments of the $\alpha_{r,\theta}$ and show that they are the same as those of normal random variables. For $m$ fixed and distinct pairs $\lp k_j, n_j \rp_{1\leq j \leq m}$ we thus compute $\E\lp \prod_{j=1}^m \alpha_{r,\theta}\lp i_j, k_j, S \rp^{n_j} \rp$. This involves the same combinatorics as the one employed in \cite{Geman1982} as mixing terms from \eqref{eq:def:alpha} arise. Indeed for asymptotical contributions, each chain $a_{l_1l_2}...a_{l_{k-1}l_k}r_{l_k}$ has to be paired exactly with itself. Using moreover independence of the $(a_{ij})$ and the $(r_i)$, this gives the joint moments
$$E\lp \prod_{j=1}^m \alpha_{r,\theta}\lp i_j, k_j, S \rp^{n_j} \rp = 
\begin{cases}
\prod_{j=1}^m \E(r_1^2)^{\frac{n_j}{2}} \lp\frac{\sigma}{\theta}\rp^{k_j n_j} \prod_{p=1}^{n_j / 2} \lp2p -1 \rp & \text{ if every $n_j$ are even}\\
0 & \text{ otherwise.}
\end{cases}$$
These are the joint moments of independent Gaussian random variables, i.e.
$$\lp \alpha_{r,\theta}\lp i_1, k_1, S \rp, ..., \alpha_{r,\theta}\lp i_m, k_m, S  \rp \rp \xRightarrow[ ]{\mathcal{L}} \lp Z_1, ..., Z_m \rp, $$
where $Z_j \sim \mathcal{N}\lp0,\; \E\lp r_1^2\rp \frac{\sigma^{2k_j}}{\theta^{2k_j}} \rp$ are independent.
Returning now to the Neumann progression and by calculating variance and expectation, we find that when $S \rightarrow \infty$
$$\lim_{S\to \infty}x_i^* = x_i^{*,\infty} \sim \mathcal{N}\lp -\frac{\E\lp r_1 \rp}{\theta}, \ \frac{\Var(r_1)}{\theta^2} +  \frac{\E\lp r_1^2 \rp\sigma^2}{\theta^2\lp\theta^2 - \sigma^2 \rp} \rp.$$
\endproof

With this result, one shows that asymptotically, there exists almost surely no feasible equilibrium for the system (\ref{LVWM}).
\begin{thm}\label{thm:feasibility:delta1/2}
Under Assumption~\ref{assum:delta1/2}, the probability that an equilibrium of model~\eqref{LVWM} is feasible tends toward zero. That is 
$\lim_{S \to \infty}P_{S}  =0 $.
\end{thm}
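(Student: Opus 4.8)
The plan is to use Proposition~\ref{prop:asympt:solut:sublinear} to collapse the $S$-dimensional positivity event onto a \emph{fixed} finite-dimensional one, and then to exploit the asymptotic independence of the equilibrium coordinates together with the fact that a non-degenerate Gaussian puts positive mass on both half-lines.

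First I would fix an integer $k\ge 1$ and observe that for every $S\ge k$ the event $\{x_i^*>0\text{ for all }i\le S\}$ is contained in $\{x_i^*>0\text{ for all }i\le k\}$, so that $P_S\le \Prob(x_1^*>0,\dots,x_k^*>0)$. The task then reduces to controlling the right-hand side as $S\to\infty$ with $k$ held fixed. Here I would invoke Proposition~\ref{prop:asympt:solut:sublinear}: by the joint-moment computation carried out in its proof, $(x_1^*,\dots,x_k^*)$ has all joint moments converging, as $S\to\infty$, to those of a vector $(Z_1,\dots,Z_k)$ of i.i.d.\ Gaussians $\mathcal N(\mu^*,(\sigma^*)^2)$ with $\mu^*=-\E(r_1)/\theta$ and $(\sigma^*)^2=\Var(r_1)/\theta^2+\E(r_1^2)\sigma^2/\big(\theta^2(\theta^2-\sigma^2)\big)$; since a multivariate normal is determined by its moments, this upgrades to convergence in law $(x_1^*,\dots,x_k^*)\Rightarrow(Z_1,\dots,Z_k)$.

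Next I would note that the limit law is absolutely continuous, so the open orthant $(0,\infty)^k$ is a continuity set (its boundary, where some coordinate vanishes, carries zero limiting mass); the Portmanteau theorem then gives
\[\lim_{S\to\infty}\Prob(x_1^*>0,\dots,x_k^*>0)=\prod_{i=1}^k\Prob(Z_i>0)=q^k,\qquad q:=\Prob(Z_1>0).\]
Because $(\sigma^*)^2>0$, the Gaussian $Z_1$ charges $(-\infty,0]$ with positive probability, hence $q<1$. Combining with the first step yields $\limsup_{S\to\infty}P_S\le q^k$ for \emph{every} $k\ge1$; letting $k\to\infty$ and using $q<1$ forces $\limsup_{S\to\infty}P_S=0$, i.e.\ $\lim_{S\to\infty}P_S=0$.

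The truncation and the Portmanteau step are routine; the point that needs care is the upgrade from the literal content of Proposition~\ref{prop:asympt:solut:sublinear} — marginal convergence plus asymptotic independence of finite sub-collections — to genuine joint convergence in law of $(x_1^*,\dots,x_k^*)$ towards a vector of i.i.d.\ Gaussians, which is exactly where the joint-moment asymptotics from the proof of the proposition, combined with moment-determinacy of the multivariate normal, enter. A minor bookkeeping point is to record that $(\sigma^*)^2>0$ so that $q$ stays bounded away from $1$; the degenerate situation $\sigma=0$ with deterministic $r$ (where $x^*$ is a fixed positive vector and $P_S\equiv1$) is excluded from the random model under consideration.
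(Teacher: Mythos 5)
Your proof is correct and follows essentially the same route as the paper's: bound $P_S$ above by the probability that a fixed finite sub-collection $(x_1^*,\dots,x_k^*)$ is positive, use the asymptotic Gaussianity and independence from Proposition~\ref{prop:asympt:solut:sublinear} to identify the limit of that probability as $q^k$ with $q<1$, and let $k\to\infty$. The extra care you take (upgrading marginal convergence plus asymptotic independence to joint convergence via the joint moments, the Portmanteau continuity-set step, and noting $(\sigma^*)^2>0$) merely makes explicit details the paper's proof leaves implicit.
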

\begin{proof}
Let us define $P_{S}^{(k)}=\Prob\lp x^{*}_j>0,\, \forall j=1,\ldots,k\rp$. With this notation, $P_{S} = P_{S}^{(S)}$. The convergence in law in~\eqref{EquilibriaNormal}, as well as the independence of $(x_i^*)$ imply
\begin{eqnarray*}
0 \leq \lim_{S\to \infty} P_S & \leq & \limsup_{S\to \infty} P_{S}^{(S)}\\
 & \leq & \limsup_{S\to \infty} P_{S}^{(k)} \\
& = & \Prob\lp x^{*}_j>0,\, \forall j=1,\ldots,k \rp \\
& = & \Phi\lp\frac{\E(r_1)}{\sqrt{\Var(r_1) + \E\lp r_1^2 \rp \frac{\sigma^2}{\theta^2 - \sigma^2}}}   \rp^{k},
\end{eqnarray*}
for all $k\geq1$ and where $\Phi$ denotes the standard Gaussian cumulative distribution function.
Since the probability on the right-hand side is strictly less than 1, this implies that $\limsup_{S\to \infty} P_{S}=0$.
\end{proof}

When considering the mean structural vector of the random model $r=v$ (see Table~2), the previous proof allows to approximate the probability that an equilibrium is feasible in the following way:
\begin{equation}\label{eqn:ps:moderate:random}
P_S \asymp \Phi\lp \sqrt{\frac{\theta^2-\sigma^2}{\sigma^2}} \rp^S,
\end{equation}
which is represented in Fig.~\ref{fig:sim:proba:persist}.

\subsubsection{Simulations}\label{sec:simu}
The exponential decrease of $P_S$ is not restricted to the random model, for which our analytical result holds. Indeed, we have computed numerically the probability of feasibility for the other models, as well as for the random model (see Fig.~\ref{fig:main1}). The growth rates vector is set to the mean structural vector for each model respectively (Table~2).

An interaction is non-zero with probability $C$. We choose $a_{ij}\sim\mathcal{N}(0,\sigma)$ for any non-zero entries of the interaction matrix $A$ in the random model, i.e.~when the sign of the interaction does not matter. In this sense, $\E(a_{ij}) = 0$ and $\Var(a_{ij}) = C\cdot\Var(a_{ij} | a_{ij}\neq 0) = C\cdot\sigma^2$. In the other cases, a strictly positive interaction is randomly drawn from a folded normal distribution such that $a_{ij}\sim\left|\mathcal{N}(0,\sigma)\right|$. The expectation is $\E(a_{ij}| a_{ij}\neq 0) = \sigma\sqrt{2/\pi}$ and the variance $\Var(a_{ij} | a_{ij}\neq 0)=\sigma^2\lp  1- 2/\pi\rp$. A strictly negative interaction is similarly sampled such that $a_{ij}\sim-\left|\mathcal{N}(0,\sigma)\right|$.

In Fig.~\ref{fig:main1}, the connectance as been fixed to $C=0.25$ for each model. To reach such a connectance in the case of the niche model, we choose the niche values uniformly on the interval $\left[0 ; 1 \right]$, and their breadth according to a Beta random variable with shape parameters $(1,3)$.


\begin{figure}[H]
\centering
\includegraphics{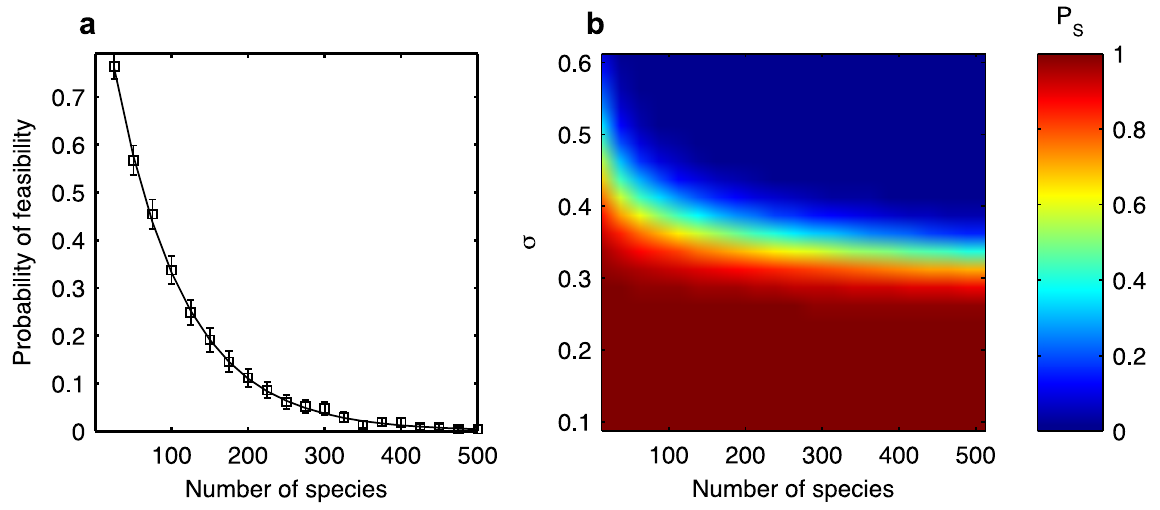}
\caption{Probability of feasibility as a function of number of species $S$ and standard deviation of interaction strengths $\sigma$ in the random model under moderate interactions and with the mean structural vector. (\textbf{a}) The continuous line denotes the analytical predictions of $P_S$, while the error bars are $95\%$ confidence intervals from Monte-Carlo estimates for $1000$ simulations. The parameters are $C=1$, $\sigma=0.4$, $\theta=-1$. (\textbf{b}) Analytical prediction of the probability of feasibility with respect to $\sigma$ and $S$.}\label{fig:sim:proba:persist}
\end{figure}

In the case of the niche and nested-hierarchy models, we also simulated the equilibria feasibility when not using the mean structural vector $v$. In Fig.~\ref{fig:nicheNested:rRandom:moderate}, the growth rates are i.i.d.~gaussian random variables with standard deviation fixed to $0.15$ and mean one (in Fig.~\ref{fig:nicheNested:rRandom:moderate} (a-d)) or mean zero (in~Fig.~\ref{fig:nicheNested:rRandom:moderate} (e-f)). In the former, the probability of feasibility rapidly decreases towards zero, similarly as what is predicted with random models. When $\E(r_i) = 0$ for all $i$, the equilibria abundances are centered around zero with a positive variance independently of the species index $i$. As such, there is no chance to observe $P_S > 0$ for this choice of $r_i$.

\begin{figure}[h!]
\centering
\includegraphics{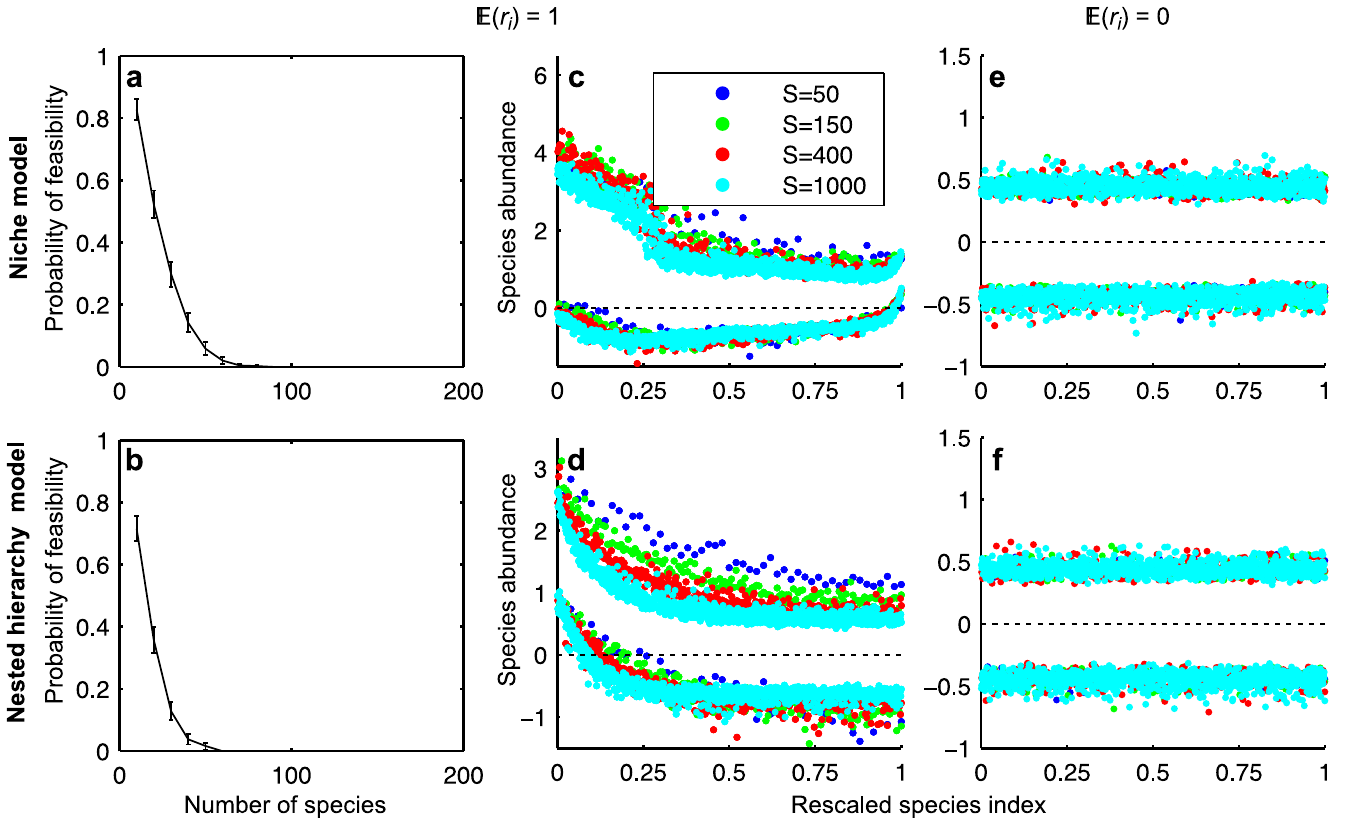}
\caption{Equilibria simulated from the niche and nested-hierarchy models under moderate interactions and for i.i.d.~random growth rates. (\textbf{a}-\textbf{b}) The probability of feasibility decreases rapidly when $E(r_i) = 1$. Envelop (the maximum and the minimum value) of the equilibria for (\textbf{c}-\textbf{d}) $\E(r_i) = 1$ and (\textbf{e}-\textbf{f}) $\E(r_i) = 0$. In this case $P_S$ is always zero. The species have been assigned a number between zero (top species) and one (basal species) corresponding to their hierarchy in the web. $500$ simulations have been performed for each $S$. The growth rates are normally distributed with a fixed standard deviation $\sqrt{\textrm{Var}(r_i)}=0.15$.}\label{fig:nicheNested:rRandom:moderate}
\end{figure}




\subsection{Weak interactions}
For the case $\delta=1$, we provide the conditions under which the system~\eqref{LVWM} possesses almost surely a feasible equilibrium. In this situation, the interactions between the species become extremely weak when the size of the system grows. Compared to the moderate case, the variance of the solution approaches zero allowing us to use a law of large numbers. This drives each component $x_i^*$ of the solution to a constant proportional to its intrinsic growth rate $r_i$ as illustrated in~Fig.~\ref{fig:delta1:traj}. This leads to feasible equilibria for any positive growth rate. 

Analytical results are given in the case of unstructured models, whereas structured models are explored by mean of simulations.

\subsubsection{Analytical results}\label{sec:analytical:weak}
We begin by giving the assumptions under which our analytical results hold :

\begin{assump}\label{assum:delta1}
In the Lotka-Volterra model~\eqref{LVWM}, we assume that
\begin{enumerate}[(i)]
\item $\delta=1$;
\item the interaction matrix $A=(a_{ij})$ has i.i.d.~entries with common mean $\E(a_{11})=C\mu_{A}$ so that $|\mu_{A}|<|\theta|$; 
\item the intrinsic growth rates vector $r = (r_i)$ has bounded entries;
\item the law of the $(a_{ij})$ satisfies $\E(a_{ij}^8)<\infty$;
\item the matrix $(\theta {I} +A / (CS))$ is nonsingular.
\end{enumerate}
\end{assump}
As for the Assumptions~\ref{assum:delta1/2}, these conditions are not restrictive on the choice of the distributions for the $(a_{ij})$. The mean structural vector satisfies the Assumption~\ref{assum:delta1} (iii) for all the models that are considered. In (ii), $\E(a_{11})=C\mu_{A}$ is equivalent to $\E(a_{11}\mid a_{11}\neq0) = \mu_{A}$ since $C$ is the probability that an entry of the matrix is set to $0$. Note furthermore that, in the settings of weak interactions, the entries of $A$ are divided by $CS$ so that $\E(a_{ij}/(CS))=\mu_{A}/S$.

\medskip

We first introduce the analytical result for the random model.

\begin{thm}\label{thm:feasibility:delta1}
Under Assumption~\ref{assum:delta1} with $\mu_{A}=0$ the asymptotic equilibrium of the model~\eqref{LVWM} is feasible and is given by
$$x_i^{*,\infty}  = \lim_{S \to \infty}x_i^{*} =  \frac{r_i}{\vert\theta\vert}  \text{ almost surely}, \quad \text{ for all } i=1,\ldots,S.$$
Moreover, $x_i^{*,\infty}$ is almost surely feasible for any $r$ such that $r_i>0$ for all $i$.
\end{thm}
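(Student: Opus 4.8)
The plan is to invert $x^{*}=(-\theta I - B)^{-1}r$ with $B:=A/(CS)$ by a Neumann expansion, exactly as in the moderate regime (cf.\ Geman~\cite{Geman1982} and Proposition~\ref{prop:asympt:solut:sublinear}), and to show that the extra factor $S^{-1/2}$ coming from $\delta=1$ rather than $\delta=1/2$ makes every term beyond the zeroth order vanish, so that the limit is deterministic. First I would record the input from random matrix theory: since $\mu_{A}=0$ the entries of $A$ are i.i.d.\ with mean $0$, and Assumption~\ref{assum:delta1}(iv) (finite eighth moment; finite fourth would already suffice) gives, via a Bai--Yin type bound, $\|A\|_{\mathrm{op}}=O(\sqrt{S})$ almost surely, hence $\|B\|_{\mathrm{op}}=O(1/\sqrt{S})\to 0$ a.s. Consequently, for all large $S$ the matrix $-\theta I - B$ is invertible (in accordance with (v)), $\|B/\theta\|_{\mathrm{op}}<1$, and
\begin{equation*}
x^{*}=-\frac{1}{\theta}\sum_{k\geq 0}\Bigl(-\tfrac{1}{\theta}\Bigr)^{k}B^{k}r ,\qquad \|x^{*}\|_{2}\le\frac{\|r\|_{2}}{|\theta|-\|B\|_{\mathrm{op}}}\le\frac{2\|r\|_{2}}{|\theta|}.
\end{equation*}
The $k=0$ term is precisely $-r/\theta=r/|\theta|$ (recall $\theta<0$), so it remains to show $\max_{1\le i\le S}\bigl|\sum_{k\ge1}(-1/\theta)^{k}(B^{k}r)_{i}\bigr|\to 0$ almost surely.

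I would split this tail into $k=1$ and $k\ge 2$. The terms $k\ge 2$ are handled by the crude submultiplicative bound $|(B^{k}r)_{i}|\le\|B\|_{\mathrm{op}}^{k}\|r\|_{2}$ together with $\|r\|_{2}\le\sqrt{S}\,\max_{i}|r_{i}|$ (the $r_{i}$ are bounded, (iii)): uniformly in $i$,
\begin{equation*}
\sum_{k\ge2}\frac{|(B^{k}r)_{i}|}{|\theta|^{k}}\le\|r\|_{2}\,\frac{(\|B\|_{\mathrm{op}}/|\theta|)^{2}}{1-\|B\|_{\mathrm{op}}/|\theta|}=O(\sqrt{S})\cdot O(1/S)=O(1/\sqrt{S})\longrightarrow 0\quad\text{a.s.}
\end{equation*}
Each extra power of $B$ contributes a factor $\|B\|_{\mathrm{op}}=O(1/\sqrt{S})$, which overcomes the $\|r\|_{2}=O(\sqrt{S})$ growth; the first-order term is exactly the borderline case and needs a probabilistic estimate rather than a deterministic norm bound.

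For $k=1$ write $(Br)_{i}=\frac{1}{CS}\sum_{j=1}^{S}a_{ij}r_{j}$, a normalised sum of i.i.d.\ centred variables bounded in $L^{8}$ (since $|r_{j}|$ is bounded and $\E(a_{ij}^{8})<\infty$). Rosenthal's inequality then gives $\E\bigl[|(Br)_{i}|^{8}\bigr]=O(S^{4})/(CS)^{8}=O(S^{-4})$, so by Markov's inequality and a union bound $\Prob\bigl(\max_{1\le i\le S}|(Br)_{i}|>\varepsilon\bigr)\le S\cdot O(S^{-4})\varepsilon^{-8}=O(S^{-3})$, which is summable in $S$. Borel--Cantelli then yields $\max_{1\le i\le S}|(Br)_{i}|\to 0$ almost surely. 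Combining the two ranges of $k$ gives $\max_{1\le i\le S}\bigl|x_{i}^{*}-r_{i}/|\theta|\bigr|\to 0$ a.s., which is the claimed convergence $x_{i}^{*,\infty}=r_{i}/|\theta|$; the ``moreover'' statement is then immediate, since $r_{i}>0$ for all $i$ forces $x_{i}^{*,\infty}>0$, and in fact $P_{S}\to 1$ whenever, in addition, the $r_{i}$ stay bounded away from $0$. The only delicate point, and the sole place where hypothesis (iv) is genuinely needed, is the uniformity over the growing index set $\{1,\dots,S\}$ in this last step: an $L^{2}$ (or even $L^{4}$) bound would make the union bound non-summable, whereas $L^{8}$ lets Borel--Cantelli go through. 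An alternative is to push Geman's path-counting/moment computation directly, but the spectral-norm route above is considerably shorter.
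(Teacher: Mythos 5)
Your proof is correct, but it does not follow the paper's route: the paper disposes of this theorem in one line, by citing Geman's result on solutions of large systems of random linear equations (\cite{Geman1982}, Thm.~1), whose proof rests on the same path-counting/joint-moment machinery used in the proof of Proposition~\ref{prop:asympt:solut:sublinear}; you instead give a self-contained argument built on a spectral-norm bound. Concretely, you use a Bai--Yin-type estimate $\|A\|=O(\sqrt{S})$ a.s.\ (available under the fourth, a fortiori eighth, moment hypothesis of Assumption~\ref{assum:delta1}(iv)) to justify the Neumann expansion, kill all terms of order $k\geq 2$ by deterministic submultiplicative bounds against $\|r\|_{2}=O(\sqrt S)$, and treat the borderline $k=1$ term probabilistically via Rosenthal's inequality, a union bound over the $S$ coordinates and Borel--Cantelli. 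This buys transparency: it makes explicit where the eighth-moment assumption genuinely enters (summability of the union bound, which an $L^{4}$ estimate would not give along this route), a point the paper leaves hidden inside the citation, and it yields uniform convergence $\max_{i}\lb x_i^{*}-r_i/|\theta|\rb\to0$ a.s., which is what one actually needs to upgrade positivity of the limit to $P_S\to1$ when the $r_i$ are bounded away from zero --- a distinction you correctly flag. What the paper's route buys is brevity and the fact that Geman's theorem comes with its hypotheses already matched to Assumption~\ref{assum:delta1}; your route instead imports the almost-sure operator-norm bound as an external random-matrix input (compare the paper's own Lemma~\ref{lemma:Geman1980}, which establishes a related norm bound by trace-moment arguments under a stronger moment-growth condition), so if you wanted the argument fully self-contained you would need to prove or precisely cite that bound, e.g.\ Yin--Bai--Krishnaiah for the nested-array setting. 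This is a presentational caveat, not a gap.
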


\begin{proof}
The convergence of $x_i^{*}$ toward $ \frac{r_i}{\vert\theta\vert}$ follows  directly from results on the solution of large random systems of linear equations, \cite[Thm.~1]{Geman1982}. 
\end{proof}
In the case of the random model, setting the growth rates vector to $v=|\theta|\cdot \mathbbm{1}$ leads thus to $x_i^{*,\infty} = 1$ for all $i$. But Theorem~\ref{thm:feasibility:delta1} is more general and allows to consider any growth rates vector $r$ with bounded entries. In Fig.~\ref{fig:delta1:traj} for example, the entries of $r$ are chosen to take only two values and one sees that the equilibrium vector also converges only toward two values (proportional to $r_i$) when $S$ becomes large.

In Assumption~\ref{assum:delta1}, the expectation $\mu_{A}$ does not need to be zero. This enables the derivation of analytical results in the case of random models in which the mean interaction is positive (mutualistic and commensalistic interactions) or negative (competitive and amensalistic interactions). Indeed, a generalisation of the previous proof by allowing arbitrary $\mu_{A}$ leads to the following:
\begin{thm}\label{thm:feasibility:delta1:b}
\label{equridiff}
Under Assumption~\ref{assum:delta1}, the asymptotic equilibrium of the model~\eqref{LVWM} is given by
$$x_i^{*,\infty}  = \lim_{S \to \infty}x_i^{*} = \frac{r_i}{|\theta|} + \frac{\mu_{A}}{|\theta|\lp |\theta| - \mu_{A} \rp} \bar r \quad \text{ almost surely},  \text{ for all } i=1,\ldots,S,$$
where $\bar r$ denotes the arithmetic mean of the entries of $r$. Moreover, if $r_i \geq \frac{\mu_{A}}{\mu_{A}+\theta}\cdot \bar r$ for all $i=1,\ldots,S$, then $x^{*,\infty}$ is feasible with probability one.
\end{thm}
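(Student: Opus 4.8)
The plan is to split off the deterministic rank-one part of the interaction matrix, reduce the remaining mean-zero part to the situation already settled in Theorem~\ref{thm:feasibility:delta1}, and then close a scalar self-consistency equation for the empirical mean of the equilibrium abundances. Concretely, write $a_{ij}=C\mu_{A}+b_{ij}$, so that $B=(b_{ij})$ has centred i.i.d.\ entries with $\E(b_{ij}^{8})<\infty$ by Assumption~\ref{assum:delta1}(iv), and $\frac{A}{CS}=\frac{\mu_{A}}{S}\mathbbm{1}\mathbbm{1}^{\top}+\frac{B}{CS}$. Since $\mathbbm{1}\mathbbm{1}^{\top}x^{*}=S\bar{x}^{*}\mathbbm{1}$ with $\bar{x}^{*}=\frac1S\sum_{j}x^{*}_{j}$, the equilibrium relation $-\theta x^{*}=r+\frac{A}{CS}x^{*}$ rewrites as
$$ -\theta\,x^{*} = r + \mu_{A}\,\bar{x}^{*}\,\mathbbm{1} + \frac{1}{CS}Bx^{*}. $$
The crux is that the last term vanishes componentwise, almost surely, as $S\to\infty$: this is exactly the mechanism behind Theorem~\ref{thm:feasibility:delta1} and \cite[Thm.~1]{Geman1982}. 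Expanding $x^{*}$ in a Neumann series (legitimate for large $S$ since $\|A/(CS)\|\to|\mu_{A}|<|\theta|$) and rerunning the moment bookkeeping of the proof of Proposition~\ref{prop:asympt:solut:sublinear}, one finds that with $\delta=1$ every factor $1/(CS)$ overpowers its index summation, so only the trivial $k=0$ chains survive in the limit; the eighth-moment bound controls the fluctuations and upgrades convergence in expectation to almost-sure convergence via Chebyshev and Borel--Cantelli. (Equivalently, one inverts $-\theta I-\frac{\mu_{A}}{S}\mathbbm{1}\mathbbm{1}^{\top}-\frac{B}{CS}$ by Sherman--Morrison as a rank-one update of $N:=-\theta I-\frac{B}{CS}$ and applies Geman's theorem to $N^{-1}r$, $N^{-1}\mathbbm{1}$ and the scalars $\tfrac1S\mathbbm{1}^{\top}N^{-1}\mathbbm{1}\to\tfrac1{|\theta|}$, $\tfrac1S\mathbbm{1}^{\top}N^{-1}r\to\tfrac{\bar r}{|\theta|}$.) Either way one gets, almost surely and for each $i$, the relation $|\theta|\,x^{*,\infty}_{i}=r_{i}+\mu_{A}\,\bar{x}^{*,\infty}$.

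To finish, I would average this identity over $i=1,\dots,S$ (the convergence being uniform enough to commute with the empirical mean), obtaining $|\theta|\,\bar{x}^{*,\infty}=\bar r+\mu_{A}\,\bar{x}^{*,\infty}$, hence $\bar{x}^{*,\infty}=\bar r/(|\theta|-\mu_{A})$, well defined because $|\mu_{A}|<|\theta|$ by Assumption~\ref{assum:delta1}(ii). Substituting back yields
$$ x^{*,\infty}_{i} = \frac{r_{i}}{|\theta|} + \frac{\mu_{A}}{|\theta|(|\theta|-\mu_{A})}\,\bar r, $$
the asserted limit. Feasibility then reduces to sign bookkeeping: multiplying $x^{*,\infty}_{i}>0$ by $|\theta|$ and using $\mu_{A}+\theta=-(|\theta|-\mu_{A})<0$, this is equivalent to $r_{i}\ge\frac{\mu_{A}}{\mu_{A}+\theta}\bar r$ for every $i$, which is the stated hypothesis; hence under it all limiting abundances are positive with probability one.

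The step I expect to be the main obstacle is the rigorous vanishing of $\frac{1}{CS}Bx^{*}$ despite the dependence between $B$ and $x^{*}$: a crude operator-norm estimate does not suffice, since $\|B/(CS)\|\asymp S^{-1/2}$ while $\|x^{*}\|_{2}\asymp S^{1/2}$, so the $\ell^{2}$ error is only $O(1)$ and one genuinely needs the term-by-term Neumann-series moment estimates of \cite{Geman1982}. The only ingredient not already present in the $\mu_{A}=0$ case is isolating the surviving rank-one contribution and resolving the self-consistent equation for $\bar{x}^{*}$; everything else transcribes the arguments behind Proposition~\ref{prop:asympt:solut:sublinear} and Theorem~\ref{thm:feasibility:delta1}.
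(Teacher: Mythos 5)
Your proposal is correct and follows essentially the same route as the paper: the paper's proof consists of a single sentence writing $\hat W = A - M$ with $M$ the constant matrix whose entries are all $C\mu_{A}$ (your $a_{ij}=C\mu_{A}+b_{ij}$) and invoking the argument of Geman--Hwang's Theorem~2, i.e.\ exactly your centring-plus-rank-one decomposition with the mean-zero part handled as in Theorem~\ref{thm:feasibility:delta1}. Your write-up is merely more explicit than the paper's about how the rank-one contribution is resolved (the self-consistency equation for $\bar{x}^{*}$, equivalently Sherman--Morrison) and about the real technical point, namely that the componentwise vanishing of $(CS)^{-1}Bx^{*}$ needs Geman's moment estimates rather than an operator-norm bound.
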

\begin{proof}
The proof is analog to the one provided in~\cite[Thm.~2]{Geman1982}, where we write $\hat W = W - M$ with $M$ the $S\times S$ matrix with every component set to $C\mu_{A}$, $W=A$ and $\hat W$ is an $S\times S$ random matrix with i.i.d.~centred entries.
\end{proof}

However, the competition and mutualistic models introduced in section~\ref{sec:interactionsModels} do not completely satisfy the Assumptions~\ref{assum:delta1}, since some dependences are introduced among the entries of the interaction matrix $A$. Indeed,
$a_{ij}=0 \Leftrightarrow a_{ji}=0,$
so that commensalistic/amensalistic interactions are forbidden. Moreover, the particular case of predation leads to a sign antisymmetric matrix $A$, i.e.~$a_{ij}>0 \Leftrightarrow a_{ji}<0$. In the following, we show that the same type of results are obtained for these cases. 

For mutualism or competition, we define the entries of $A$ in the following way,
$$a_{ij} = w_{ij}\cdot b_{ij} \quad \text{and} \quad a_{ji} = w_{ji}\cdot b_{ij},$$
where $(w_{ij})$ are i.i.d.~random variables of mean $\mu_{A}$ and variance $\sigma^2$ (e.g.~folded normal random variables) and $(b_{ij})$ are i.i.d.~Bernoulli random variables of parameter $C$. 

Note that for every $i,j\in\mathcal{S}$, $\E(a_{ij}) = \E(a_{ji}) = C\mu_{A}$, so that if we define $M$ to be the matrix with all elements set to $C\mu_{A}$, the matrix 
\begin{equation}\label{eq:hatW}
\hat W = A - M
\end{equation}
is centred and that $Cov(\hat w_{ij},\hat w_{ji}) = \mu_{A}^2\cdot C(1-C)\neq 0$. Like in the proof~\cite[Thm.~2]{Geman1982}, we need to show that $\|\hat W/S \| \to 0$. This assertion is based on~\cite{Geman1980}, which gives an upper bound for the norm of sample covariance matrix, and can be related to the work in~\cite{Furedi1981}, where the largest eigenvalue of random symmetric matrices is studied. Here we will show that $\|\hat W/S \| \to 0$ in the framework of~\cite{Credner2008}, where it is demonstrated that the limiting distribution of the eigenvalues of a sample covariance matrix $1/SVV^T$ remains the Mar\~cenko-Pastur law, even with some dependencies among the entries of a centred random matrix $V$. 

\begin{lemma}\label{lemma:Geman1980}
Consider the random matrix $\hat W$ defined by equation~\eqref{eq:hatW} and assume that there exists a constant $K$ such that $\E(|\hat w_{ij}|^k)\leq K^{2k}$ for any $k\leq S$ and $1\leq i,j \leq S$. Then
$$\|\hat W/S \| \to 0 \quad \text{almost surely}.$$
\end{lemma}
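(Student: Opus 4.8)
The plan is to reduce the claim about the operator norm of the non-symmetric matrix $\hat W$ to a bound on the largest eigenvalue of the associated sample covariance matrix, and then to control that eigenvalue by a moment (trace) estimate in the spirit of Geman~\cite{Geman1980}, modified to accommodate the only source of dependence present here, namely $Cov(\hat w_{ij},\hat w_{ji})\neq 0$. Since $\|\hat W\|^{2}=\lambda_{\max}(\hat W\hat W^{T})=S\cdot\lambda_{\max}\lp\tfrac1S\hat W\hat W^{T}\rp$, one has the elementary identity
\[
\|\hat W/S\|^{2}=\frac1S\,\lambda_{\max}\lp\tfrac1S\hat W\hat W^{T}\rp ,
\]
so it suffices to show that $\lambda_{\max}\lp\tfrac1S\hat W\hat W^{T}\rp$ stays bounded almost surely as $S\to\infty$; the remaining factor $1/S$ then delivers the conclusion.

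First I would observe that $\tfrac1S\hat W\hat W^{T}$ is precisely a sample covariance matrix $\tfrac1SVV^{T}$ with $V=\hat W$ square (aspect ratio $1$): its entries are centred, have a finite common variance $\tau^{2}$, and are independent apart from the symmetric pairing $\hat w_{ij}\leftrightarrow\hat w_{ji}$. By the result of~\cite{Credner2008}, this weak dependence does not alter the limiting spectral distribution --- the empirical distribution of the eigenvalues of $\tfrac1S\hat W\hat W^{T}$ still converges to the Mar\~cenko-Pastur law, which is compactly supported on $[0,4\tau^{2}]$. This pins down the bulk of the spectrum but does not by itself forbid a few outlying eigenvalues from escaping to infinity.

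To rule out such outliers I would perform a high-moment estimate. Fix an exponent $k=k(S)$ tending to infinity slowly with $S$ (for instance $k=k(S)=\lfloor S^{1/4}\rfloor$, so that $2k\leq S$ eventually and the hypothesis $\E(|\hat w_{ij}|^{k})\leq K^{2k}$ covers all moments arising below). By Markov's inequality applied to the $k$-th power,
\[
\Prob\lp\lambda_{\max}\lp\tfrac1S\hat W\hat W^{T}\rp>t\rp\leq t^{-k}\,\E\,\operatorname{tr}\lp\lp\tfrac1S\hat W\hat W^{T}\rp^{k}\rp ,
\]
and I would expand the trace as a sum over closed walks of length $2k$ alternating between row and column indices, exactly as in the classical combinatorics for sample covariance matrices. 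The moment bound on $\hat w_{ij}$ is precisely what controls the contribution of each walk, so that asymptotically only walks in which every edge is traversed an even number of times survive. The single new feature relative to Geman's i.i.d.\ setting is that an edge $(i,j)$ may now be matched either with another visit to $(i,j)$ or with a visit to $(j,i)$, because $Cov(\hat w_{ij},\hat w_{ji})\neq0$; this at most doubles the matching options per edge, hence multiplies the walk count by a bounded factor $C_{0}^{k}$ and leaves the exponential growth rate of $\E\operatorname{tr}\bigl((\tfrac1S\hat W\hat W^{T})^{k}\bigr)$ unchanged. This bookkeeping is exactly the one carried out in~\cite{Credner2008}, and the same even-walk counting underlies the largest-eigenvalue bound for the Hermitian dilation built from $\hat W$ and $\hat W^{T}$ treated in~\cite{Furedi1981}. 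Taking $t$ slightly larger than $4\tau^{2}C_{0}$, the resulting bounds on $\Prob(\lambda_{\max}>t)$ are summable in $S$, so Borel--Cantelli gives $\limsup_{S\to\infty}\lambda_{\max}\lp\tfrac1S\hat W\hat W^{T}\rp\leq 4\tau^{2}C_{0}<\infty$ almost surely, and the first display yields $\|\hat W/S\|\to0$ a.s.

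The main obstacle is the combinatorial step: one must verify that the pair-dependence $\hat w_{ij}\leftrightarrow\hat w_{ji}$ inflates the closed-walk count only by a factor bounded uniformly per edge (and uniformly in $k=k(S)$), rather than by an $S$-dependent factor that would overwhelm the $S^{-k}$ normalisation. Granting this --- which is the content of~\cite{Credner2008,Furedi1981} --- the remaining steps are the routine Markov-plus-Borel--Cantelli argument, combined with the identity $\|\hat W/S\|^{2}=\tfrac1S\lambda_{\max}\lp\tfrac1S\hat W\hat W^{T}\rp$, which converts a bounded top eigenvalue of the sample covariance matrix into a vanishing operator norm of $\hat W/S$.
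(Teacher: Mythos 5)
Your proposal runs on the same machinery as the paper's proof---reduce $\|\hat W/S\|$ to the top eigenvalue of the sample covariance matrix $\tfrac1S\hat W\hat W^{T}$, control it by a trace--moment expansion whose combinatorics is borrowed from~\cite{Credner2008} to absorb the $(i,j)\leftrightarrow(j,i)$ dependence, then conclude with Markov and Borel--Cantelli---but you deploy the estimate differently, and in a way that makes your job strictly harder than necessary. You aim at the Geman/F\"uredi--Koml\'os-type statement that $\lambda_{\max}\lp\tfrac1S\hat W\hat W^{T}\rp$ stays \emph{bounded} almost surely, which forces you to take $k=k(S)\to\infty$ and to control the walk-counting uniformly in $k$; that uniform control is not literally supplied by~\cite{Credner2008}, which concerns fixed-order moments and the limiting spectral distribution, so the step you flag as ``granting this'' is the one place where extra work (essentially redoing Geman's growing-$k$ bookkeeping under the pair dependence, which your strong moment hypothesis does make feasible) would be needed. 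The paper avoids this delicacy entirely: it fixes a single moment order $k\geq 3$, bounds $\E\lp\operatorname{tr}\lp\tfrac1S\hat W\hat W^{T}\rp^{k}\rp$ by $4^{k}S+cK^{2k}+\mathcal{O}(1/S)$, and applies Markov with the \emph{growing} threshold $S\epsilon^{2}$, i.e.\ it only needs $\lambda_{\max}=o(S)$ rather than $\lambda_{\max}=O(1)$, and the resulting probabilities $\mathcal{O}(S^{1-k})$ are already summable for $k\geq 3$. So your route, if completed, proves a sharper intermediate fact (an a.s.\ bounded top eigenvalue, close to the true Mar\v{c}enko--Pastur edge), while the paper's route buys a much shorter and more robust argument at the price of a cruder eigenvalue bound that is nonetheless exactly sufficient for $\|\hat W/S\|\to 0$; if you want to keep your version, you should either carry out the growing-$k$ combinatorics explicitly or, more economically, switch to the fixed-$k$/growing-threshold variant.
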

\begin{proof}
The conditions (MP1), (MP2) and (MP3) in~\cite{Credner2008} still hold for $\hat W$ in our framework. Thus letting $\lambda_{max}$ be the largest eigenvalue of $1/S\hat W \hat W^T$, the same combinatorial arguments can be used, and following~\cite[section 4]{Credner2008} we arrive to 
\begin{eqnarray*}
\E(\lambda_{max}^k) \leq \E\lp \textrm{Tr}(\frac{1}{S}\hat W \hat W^T)^k \rp &\leq & \frac{(2k)!}{k!(k+1)!}S + c\cdot K^{2k} + \mathcal{O}(1/S) \\
& < & 4^kS+c\cdot K^{2k} + \mathcal{O}(1/S),
\end{eqnarray*}
where $c$ is a constant and $\textrm{Tr}(\cdot)$ the trace operator. Letting $\epsilon>0$ and using the Markov inequality, we find
\begin{eqnarray*}
\Prob\lp \frac{1}{\sqrt S}\left\| \frac{1}{\sqrt S} \hat W \right\| > \epsilon \rp = \Prob\lp \lambda_{max}^k > S^k \epsilon^{2k} \rp \leq \frac{4^kS + cK^{2k}}{S^k\epsilon^{2k}} + \mathcal{O}(1/S^{k+1}),
\end{eqnarray*}
which goes to zero when $S$ is large and since $\sum_S \Prob\lp \frac{1}{\sqrt S}\left\| \frac{1}{\sqrt S} \hat W \right\| > \epsilon \rp < \infty$ for $k\geq3$, almost sure convergence holds.
\end{proof}

\begin{corollary}\label{cor:feasibility:delta1:mutPred}
Under Assumption~\ref{assum:delta1} with a matrix $A$ so that $a_{ij}=0 \Leftrightarrow a_{ji}=0$, and where $\E(a_{ij}|a_{ij}\neq0)=\mu_{A}$ and $r=\gamma\cdot\mathbbm{1}$, with $\gamma \in \R^*$, the asymptotic equilibrium of the model~\eqref{LVWM} is given by
$$x_i^{*,\infty}  = \lim_{S \to \infty}x_i^{*} = \frac{\gamma}{|\theta|-\mu_{A}} \text{ almost surely},$$
$\text{ for all } i=1,\ldots,S$. Under these assumptions and for $\gamma/(|\theta|-\mu_{A})>0$, $x^{*,\infty}$ is feasible with probability one.
\end{corollary}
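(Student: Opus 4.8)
The plan is to follow closely the proof of Theorem~\ref{thm:feasibility:delta1:b}, substituting Lemma~\ref{lemma:Geman1980} for the independence of the matrix entries that \cite[Thm.~2]{Geman1982} relied on. As in~\eqref{eq:hatW}, write $A=M+\hat W$, where $M$ is the $S\times S$ matrix all of whose entries equal $C\mu_{A}$ and $\hat W$ is centred, its only dependences being $\mathrm{Cov}(\hat w_{ij},\hat w_{ji})=\mu_A^{2}C(1-C)$. Since $\theta<0$ we have $M/(CS)=(\mu_A/S)\,\mathbbm{1}\mathbbm{1}^{t}$, so setting $B=|\theta|\,I-(\mu_A/S)\,\mathbbm{1}\mathbbm{1}^{t}$ and $E=\hat W/(CS)$, equation~\eqref{eq:def:x*} with $r=\gamma\mathbbm{1}$ becomes $x^{*}=(B-E)^{-1}\gamma\mathbbm{1}=(I-B^{-1}E)^{-1}B^{-1}\gamma\mathbbm{1}$, legitimate once the Neumann series below converges.

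First I would isolate the deterministic contribution. As $B$ is a rank-one perturbation of $|\theta|\,I$, the Sherman--Morrison identity gives $B^{-1}=|\theta|^{-1}I+\mu_A\big(|\theta|\,S(|\theta|-\mu_A)\big)^{-1}\mathbbm{1}\mathbbm{1}^{t}$ (here $|\theta|-\mu_A>0$ because $|\mu_A|<|\theta|$), and applying this to $\gamma\mathbbm{1}$, using $\mathbbm{1}^{t}\mathbbm{1}=S$, yields $B^{-1}\gamma\mathbbm{1}=\gamma(|\theta|-\mu_A)^{-1}\mathbbm{1}=:c\,\mathbbm{1}$, exactly, for every $S$. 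Moreover $\|B^{-1}\|$ is bounded uniformly in $S$ (the rank-one term has norm $|\mu_A|/\big(|\theta|\,\big||\theta|-\mu_A\big|\big)$ since $\|\mathbbm{1}\mathbbm{1}^{t}\|=S$), and by Lemma~\ref{lemma:Geman1980}, $\|E\|=C^{-1}\|\hat W/S\|\to0$ almost surely; in fact the moment bounds in its proof give $\|\hat W\|=O(S^{1/2+\eta})$ a.s.\ for every $\eta>0$. Hence $\|B^{-1}E\|=O(S^{-1/2+\eta})\to0$ a.s., $(I-B^{-1}E)^{-1}=\sum_{k\ge0}(B^{-1}E)^{k}$ converges, and $x^{*}=c\,\mathbbm{1}+c\sum_{k\ge1}(B^{-1}E)^{k}\mathbbm{1}$.

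It then remains to show that, for each fixed $i$, $\big[\sum_{k\ge1}(B^{-1}E)^{k}\mathbbm{1}\big]_i\to0$ almost surely. I would split the series at a large fixed $K$. The tail is controlled by the operator norm: $\big|\big[\sum_{k>K}(B^{-1}E)^{k}\mathbbm{1}\big]_i\big|\le\|\mathbbm{1}\|_2\,\|B^{-1}E\|^{K+1}\big(1-\|B^{-1}E\|\big)^{-1}=O\big(S^{1/2-(1/2-\eta)(K+1)}\big)\to0$ for $\eta$ small and $K$ large enough. For the head, each term $[(B^{-1}E)^{k}\mathbbm{1}]_i$ with $k\ge1$ is an explicit polynomial in the entries of $\hat W$, with deterministic coefficients read off from $B^{-1}$, and I would bound its even moments $\E\big([(B^{-1}E)^{k}\mathbbm{1}]_i^{2p}\big)$ by the same walk/pairing combinatorics as in \cite[Thm.~1]{Geman1982} and in the proof of Lemma~\ref{lemma:Geman1980}: because $\hat W$ is centred, only walks in which every edge is used an even number of times contribute at leading order, and the extra admissible pairings produced by the correlations $\hat w_{ij}\hat w_{ji}$ do not change the order of magnitude, giving $\E\big([(B^{-1}E)^{k}\mathbbm{1}]_i^{2p}\big)=O(S^{-p})$. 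Taking $p$ large and invoking Borel--Cantelli yields $[(B^{-1}E)^{k}\mathbbm{1}]_i\to0$ a.s., so the finite head tends to $0$ a.s.\ as well; hence $x_i^{*}\to c=\gamma/(|\theta|-\mu_A)$ almost surely for every $i$.

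Finally, for feasibility when $c>0$: the moment estimates above are uniform in $i\in\{1,\dots,S\}$, so a union bound over the $S$ coordinates combined with Borel--Cantelli upgrades the coordinatewise limit to $\max_{1\le i\le S}|x_i^{*}-c|\to0$ a.s., whence $\min_i x_i^{*}\to c>0$ and $P_S=\Prob(x_i^{*}>0\ \forall i)\to1$. I expect the main obstacle to be exactly the combinatorial moment bound for the head terms, i.e.\ re-running Geman's self-averaging argument while keeping track of the $\hat w_{ij}\leftrightarrow\hat w_{ji}$ dependence (which is precisely why \cite[Thm.~1]{Geman1982} cannot be quoted verbatim, and why Lemma~\ref{lemma:Geman1980} together with the combinatorics of \cite{Credner2008} are invoked), with a secondary difficulty in securing enough uniformity in the coordinate index to pass from coordinatewise positivity to $P_S\to1$.
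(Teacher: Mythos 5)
Your proposal is correct and follows essentially the same route as the paper: the decomposition $A=M+\hat W$ of equation~\eqref{eq:hatW}, control of the centred part via Lemma~\ref{lemma:Geman1980}, exact treatment of the rank-one mean matrix (your Sherman--Morrison step reproduces the $\bar r$-correction of Theorem~\ref{thm:feasibility:delta1:b}, giving $\gamma/(|\theta|-\mu_A)$), and a rerun of the Geman--Hwang self-averaging argument exploiting that each row of $\hat W$ still has independent entries. The paper merely sketches this by citing Theorem~\ref{thm:feasibility:delta1:b} and Lemma~\ref{lemma:Geman1980}; your write-up is a faithful, more explicit expansion of that same argument.
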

\begin{proof}
Since each line in $A$, i.e.~the collection $\left(a_{ij}\right)_{1\leq j\leq S}$ for arbitrary $i\in\mathcal{S}$, still contains independent elements for arbitrary $S$, the proof is analog to the one of Thm~\ref{thm:feasibility:delta1:b} by using Lemma~\ref{lemma:Geman1980}.
\end{proof}

In the case of predation, interactions have the form $(+,-)$, so that the previous Corollary extends in the following way.

\begin{corollary}\label{cor:feasibility:delta1:pred}
Consider Assumption~\ref{assum:delta1} with a matrix $A$ so that $\mu_{A}=\frac{\mu_{A_+} + \mu_{A_-}}{2}$, where $\E(a_{ij}|a_{ij}>0)=\mu_{A_+}>0$, $\E(a_{ij}|a_{ij}<0)=\mu_{A_-}<0$, and with $a_{ij}=0 \Leftrightarrow a_{ji}=0$ and $\sign(a_{ij}) = -\sign(a_{ji})$. Then for $r=\gamma\cdot\mathbbm{1}$, with $\gamma \in \R^*$, the asymptotic equilibrium of the model~\eqref{LVWM} is given by
$$x_i^{*,\infty}  = \lim_{S \to \infty}x_i^{*} = \frac{\gamma}{|\theta|-\frac{\mu_{A_+} + \mu_{A_-}}{2}} \text{ almost surely},$$
$\text{ for all } i=1,\ldots,S$. Under these assumptions and for $\gamma/\lp|\theta|-\frac{\mu_{A_+} + \mu_{A_-}}{2}\rp>0$, $x^{*,\infty}$ is feasible with probability one.
\end{corollary}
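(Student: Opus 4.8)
The plan is to reproduce, in the sign-antisymmetric setting, the argument behind Theorem~\ref{thm:feasibility:delta1:b} and Corollary~\ref{cor:feasibility:delta1:mutPred}: split $A$ into a deterministic rank-one ``mean-field'' part plus a centred fluctuation of operator norm $o(S)$, and read the equilibrium off the mean part. First I would pin down the common expectation of the off-diagonal entries. A linked pair receives the signs $(+,-)$ or $(-,+)$ with probability $\tfrac12$ each, so for $i\neq j$ the entry $a_{ij}$ is non-zero with probability $C$, and then positive with conditional probability $\tfrac12$ and of mean $\mu_{A_+}$, or negative with conditional probability $\tfrac12$ and of mean $\mu_{A_-}$; hence $\E(a_{ij}\mid a_{ij}\neq0)=\tfrac12(\mu_{A_+}+\mu_{A_-})=\mu_{A}$ and $\E(a_{ij})=\E(a_{ji})=C\mu_{A}$, while $a_{ii}=0$. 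Writing $M$ for the $S\times S$ matrix all of whose entries equal $C\mu_{A}$ and $\hat W=A-M$ as in~\eqref{eq:hatW}, the fluctuation $\hat W$ is centred off the diagonal (its diagonal being the deterministic, norm-negligible $-C\mu_{A}I$), every row $(\hat w_{ij})_{1\le j\le S}$ has independent entries for all $S$, and the only stochastic dependence links transposed pairs $(\hat w_{ij},\hat w_{ji})$ through the constraint $\sign(a_{ij})=-\sign(a_{ji})$ (the magnitudes $|a_{ij}|$, $|a_{ji}|$ being sampled independently).

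The technical heart is then to show $\|\hat W/S\|\to0$ almost surely. Since the dependence is confined to transposed positions and the rows are independent, the combinatorial estimate on $\E\!\big(\textrm{Tr}\,(\tfrac1S\hat W\hat W^{T})^{k}\big)$ driving Lemma~\ref{lemma:Geman1980} (following~\cite{Credner2008}) carries over verbatim from the competitive/mutualistic case: the antisymmetric sign coupling only modifies pairings that feed into lower-order terms, so one still obtains $\E(\lambda_{\max}^{k})\le 4^{k}S+cK^{2k}+\mathcal{O}(1/S)$ for the largest eigenvalue of $\tfrac1S\hat W\hat W^{T}$, with the moments needed for a fixed $k\ge3$ supplied by Assumption~\ref{assum:delta1}(iv). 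A Markov inequality together with Borel--Cantelli then gives $\|\hat W/S\|\to0$, hence $\|\hat W/(CS)\|=C^{-1}\|\hat W/S\|\to0$, almost surely --- this is precisely the point at which Corollary~\ref{cor:feasibility:delta1:mutPred} says the proof ``is analog'' once row-independence is invoked.

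With this estimate I would run the perturbation step of~\cite[Thm.~2]{Geman1982} and Theorem~\ref{thm:feasibility:delta1:b}, which couples the norm bound with the row-wise independence of $\hat W$. For $r=\gamma\mathbbm{1}$ the unperturbed system $\big(-\theta I-\tfrac1{CS}M\big)x=\gamma\mathbbm{1}$ solves explicitly: as $M=C\mu_{A}\,\mathbbm{1}\mathbbm{1}^{T}$ is rank one with $\mathbbm{1}^{T}\mathbbm{1}=S$, the ansatz $x=c\,\mathbbm{1}$ gives $c\,(|\theta|-\mu_{A})=\gamma$, i.e.
$$c=\frac{\gamma}{|\theta|-\mu_{A}}=\frac{\gamma}{|\theta|-\tfrac{\mu_{A_+}+\mu_{A_-}}{2}},$$
which is well defined because Assumption~\ref{assum:delta1}(ii), $|\mu_{A}|<|\theta|$, forces $|\theta|-\mu_{A}>0$ (whether $\mu_{A}\ge0$ or $\mu_{A}<0$). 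Since $\|\hat W/(CS)\|\to0$ while the eigenvalues of $-\theta I-\tfrac1{CS}M$ --- namely $|\theta|$ with multiplicity $S-1$ and $|\theta|-\mu_{A}$ --- stay bounded away from $0$, Geman's perturbation argument yields $x^{*}_i\to c$ almost surely as $S\to\infty$, for every $i$. All coordinates of the limit coinciding with $c$, the vector $x^{*,\infty}$ is feasible with probability one exactly when $c>0$, i.e.~when $\gamma/\big(|\theta|-\tfrac{\mu_{A_+}+\mu_{A_-}}{2}\big)>0$, which is the claimed condition.

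The one genuine obstacle is the second paragraph: one must verify that replacing the shared-Bernoulli-mask dependence of the mutualistic/competitive models by the perfect anti-correlation of signs does not spoil the trace-moment estimate underlying Lemma~\ref{lemma:Geman1980}. I expect this to be routine, as the coupling stays strictly local and the rows remain independent; the rest of the proof is either an explicit finite-rank computation or a verbatim citation of Geman's results.
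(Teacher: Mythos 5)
Your proposal follows exactly the route the paper intends: the paper states this corollary without a separate proof, as an extension of Corollary~\ref{cor:feasibility:delta1:mutPred}, whose argument is precisely the decomposition $A=M+\hat W$ with $M=C\mu_A\mathbbm{1}\mathbbm{1}^{T}$, the norm bound $\|\hat W/S\|\to0$ from Lemma~\ref{lemma:Geman1980} (using row-wise independence, with dependence confined to transposed entries), and Geman's perturbation step giving $x_i^{*}\to\gamma/(|\theta|-\mu_A)$ with $\mu_A=\frac{\mu_{A_+}+\mu_{A_-}}{2}$. Your write-up simply supplies the details the paper leaves implicit, including the correct identification $\E(a_{ij}\mid a_{ij}\neq0)=\frac{1}{2}(\mu_{A_+}+\mu_{A_-})$ consistent with Table~2, so it is correct and essentially the same proof.
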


Consequently, the mean structural vectors (see Table~2) leads almost surely $x^*$ to the vector $\mathbbm{1}$ for every unstructured model.

\subsubsection{Simulations}\label{sec:simu:weak}
\textbf{Unstructured networks.} As in the moderate case, simulations have been performed to illustrate our analytical results. We illustrate the results for the random model in Fig.~\ref{fig:delta1:traj}. The outcomes for the other unstructured models are analogous. All random variables $a_{ij}$ are defined as in section~\ref{sec:simu}, with $C=0.25$, $\sigma=0.4$ and $\theta=-1$.

\begin{figure}[h]
\centering
\includegraphics{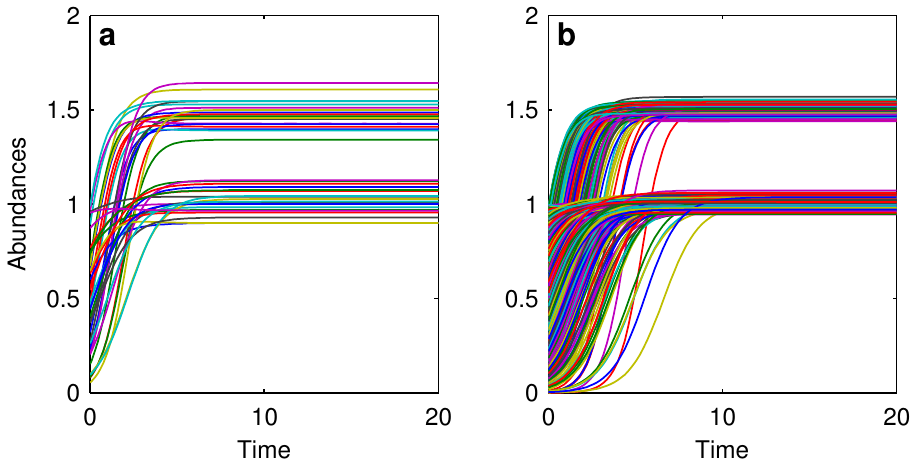}
\caption{Trajectories of the abundances of species through time in the case of weak interactions ($\delta = 1$) for the random model. Half of the species have a growth rate set to $1$, the other half set to $1.5$. As predicted analytically (Thm~\ref{thm:feasibility:delta1}), the abundances stabilise around their intrinsic growth rates ($\vert\theta\vert = 1$ and $\sigma = 0.4$). The simulations have been performed with MATLAB ode45. (\textbf{a}) With 36 species. (\textbf{b}) With 500 species. Note that for $S\to\infty$, equilibrium abundances converge to $1$ and $1.5$.}\label{fig:delta1:traj}
\end{figure}

\noindent\textbf{Structured networks.} The case of the cascade model~\cite{Cohen1990community} is illustrated in Fig.~\ref{fig:delta1:structured} (a) and (b). We represent the envelop of the equilibria for different $S$ from 1000 simulations with $C=0.25$, $\sigma=0.4$ and $\theta=-1$. The standard deviation of any $x^*_i$ decreases towards zero when $S$ increases, independently of the role of species $i$. Using the mean structural vector, the same convergence to the equilibrium $\mathbbm{1}$ as in the random model is hence obtained. We hypothesise that this result is a consequence of the cascade model yielding adjacency matrices with inferior triangular parts constructed like a Erd\H{o}s-R\'enyi network. The distribution of a top, an intermediate and a basal species are illustrated in Fig.~\ref{fig:delta1:structured:histogramNestedNiche} (a) and shows that any species abundance is independent on its role in the web, exactly as in random models.

For the niche model~\cite{Williams2000simple} and the nested-hierarchy model~\cite{Cattin2004}, a particular phenomenon occurs that is not observed for the other models. Indeed, simulations show that the equilibrium remains random when $S$ increases. This result is likely attributable to the construction of the models, with the standard deviation of the abundances depending on the hierarchical position of the species. Importantly, standard deviations do not decrease towards zero as the size of the network grows, as illustrated in Fig.~\ref{fig:delta1:structured} (d) and (f). Indeed, these highly structured networks and their related mean structural growth rates induce correlations among the abundances at equilibrium that prevent a convergence to a deterministic value, but rather takes place on a compact support. This is illustrated by the envelop of the equilibria that is represented in Fig.~\ref{fig:delta1:structured} (c) and (e) for different values of $S$. The empirical distribution of $x_i^*$ for particular species roles $i$ is illustrated in Fig.~\ref{fig:delta1:structured:histogramNestedNiche}. From these results, it is apparent that basal species may rapidly converge to a deterministic constant. However, for intermediate and top species, simulations show that the convergence is very likely to occur on a non-trivial support.

\begin{figure}[h!]
\centering
\includegraphics{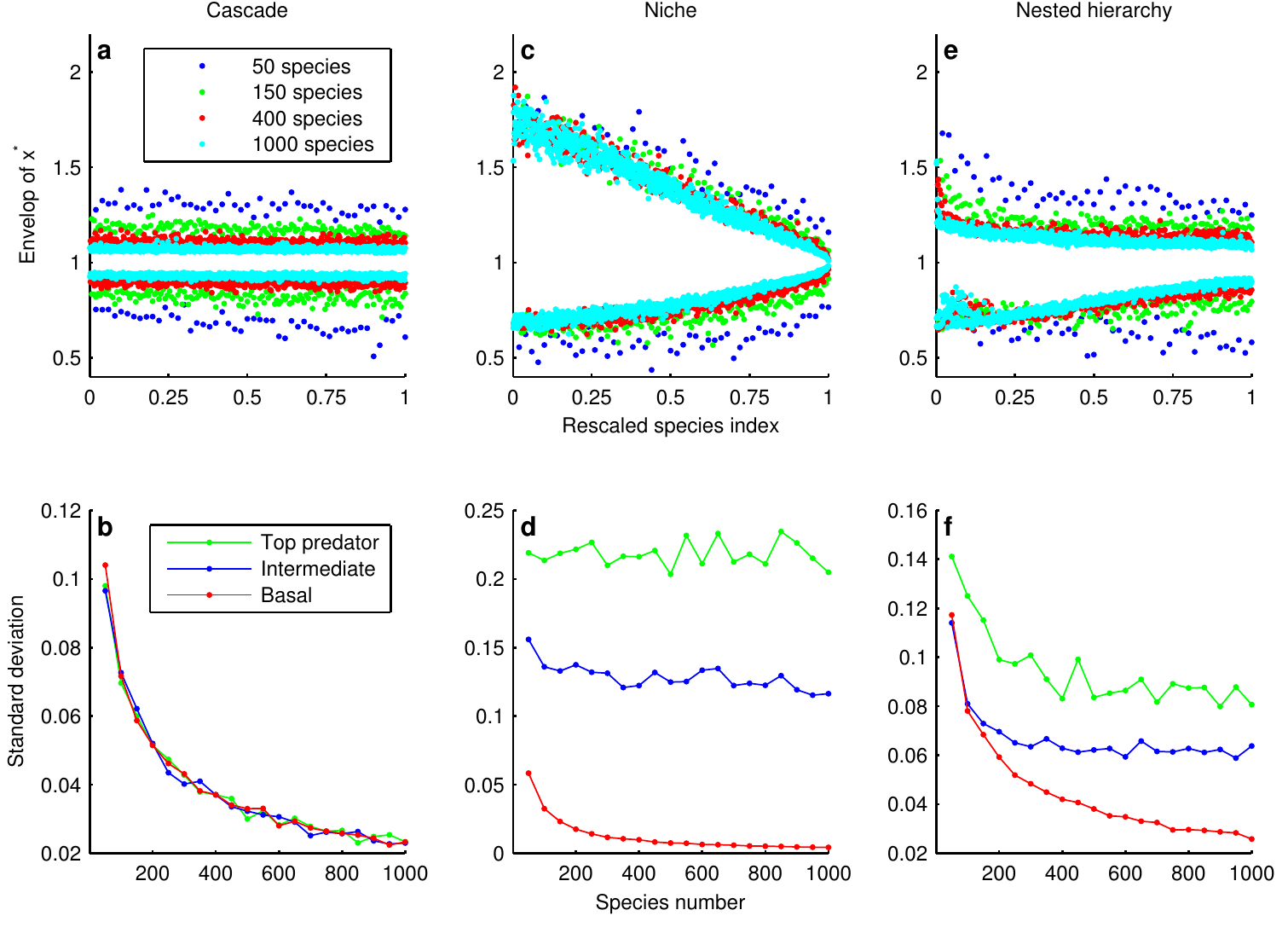}
\caption{Behaviour of the structured models under weak interactions with the mean structural vector. In the first row, we plot for different values of $S$ the envelop (the maximum and the minimal value) of the equilibrium $x^*$ among 1000 simulations. The species have been assigned a number between zero (top species) and one (basal species) corresponding to their hierarchy in the web. In the second row, the standard deviations of $x_{i}^{*}$ for $i=1$ (top predator), $i=S/2$ (intermediate) and $i=S$ (basal) are represented as a function of $S$. The parameters are $C=0.25$, $\sigma=0.4$ and $\theta=-1$. (\textbf{a} and \textbf{b}) In the cascade model, although hierarchically ordered, species tend to behave similarly with regard to convergence as $S$ grows. The equilibrium converges almost surely to the vector $\mathbbm{1}$. (\textbf{c} - \textbf{f}) The niche and the nested-hierarchy models keep the randomness of the equilibrium when $S$ grows, even under weak interactions.}\label{fig:delta1:structured}
\end{figure}

\begin{figure}[h!]
\centering
\includegraphics{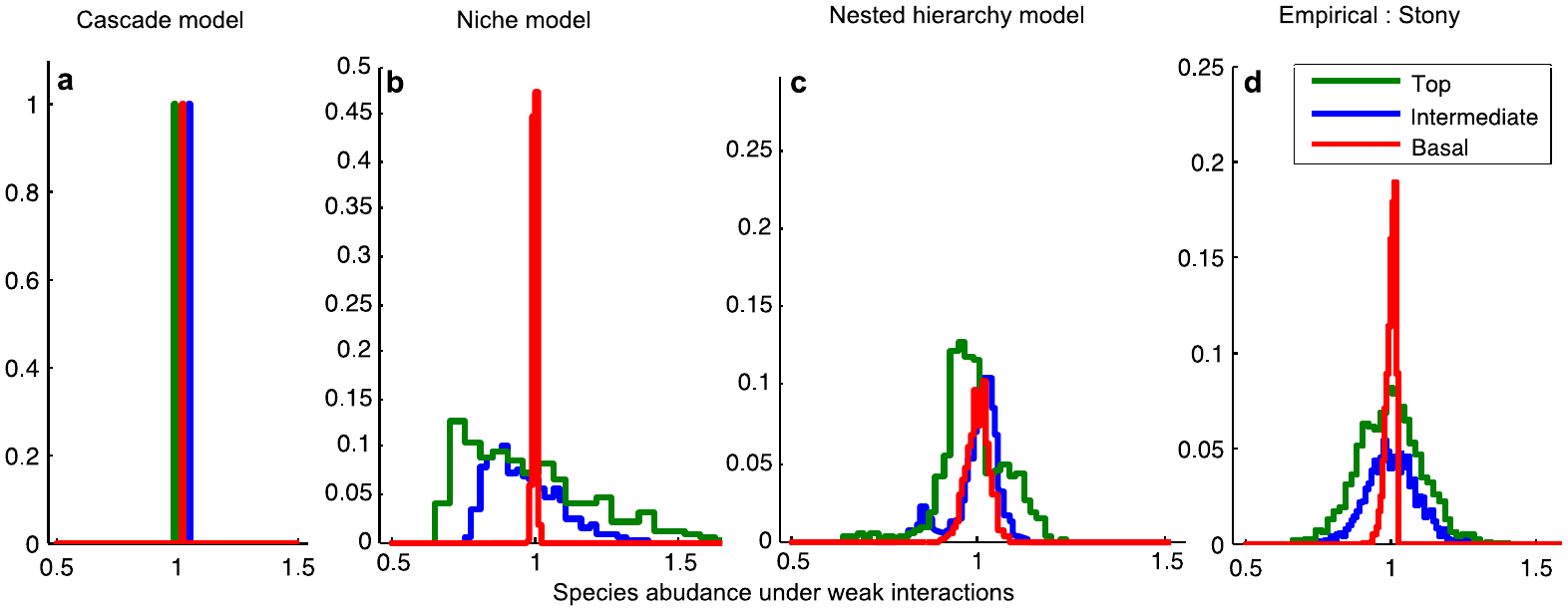}
\caption{Empirical distribution of top, intermediate and basal species under a regime of weak interactions. (\textbf{a}) Cascade model with $S=800$. (\textbf{b}) Niche model with $S=800$. (\textbf{c}) Nested-hierarchy model with $S=800$. The same parameters as in Fig.~\ref{fig:delta1:structured} have been used among 1000 simulations. (\textbf{d}) The distribution of a top, intermediate and basal species from the empirical food web Stony is illustrated. The interactions strength have been simulated similarly as for any structured web (see section~\ref{sec:interactionsModels}). The mean structural vector has been simulated with Monte-Carlo methods (200 trials).}\label{fig:delta1:structured:histogramNestedNiche}
\end{figure}

We also tested i.i.d.~Gaussian random growth rates with $\sigma_r = \sqrt{\Var(r_i)}=0.15$ for the niche and the nested-hierarchy models in Fig.~\ref{fig:nicheNested:rRandom:weak}. When $\E(r_i)=1$, a similar phenomenon as what was observed with the mean structural vector appears. The standard deviations of $x_i^*$ converge towards positive numbers and the resulting distribution of $x_i^*$ depends on the species index $i$. However, when $\E(r_i) = 0$ or when the standard deviation of $r_i$ increases, this particular hierarchical behavior disappears, as illustrated in Figs.~\ref{fig:nicheNested:rRandom:weak} and~\ref{fig:nicheNested:rRandom:weak:variances}. Every equilibria $x_i^*$ are similarly distributed around zero.

\begin{figure}[h!]
\centering
\includegraphics{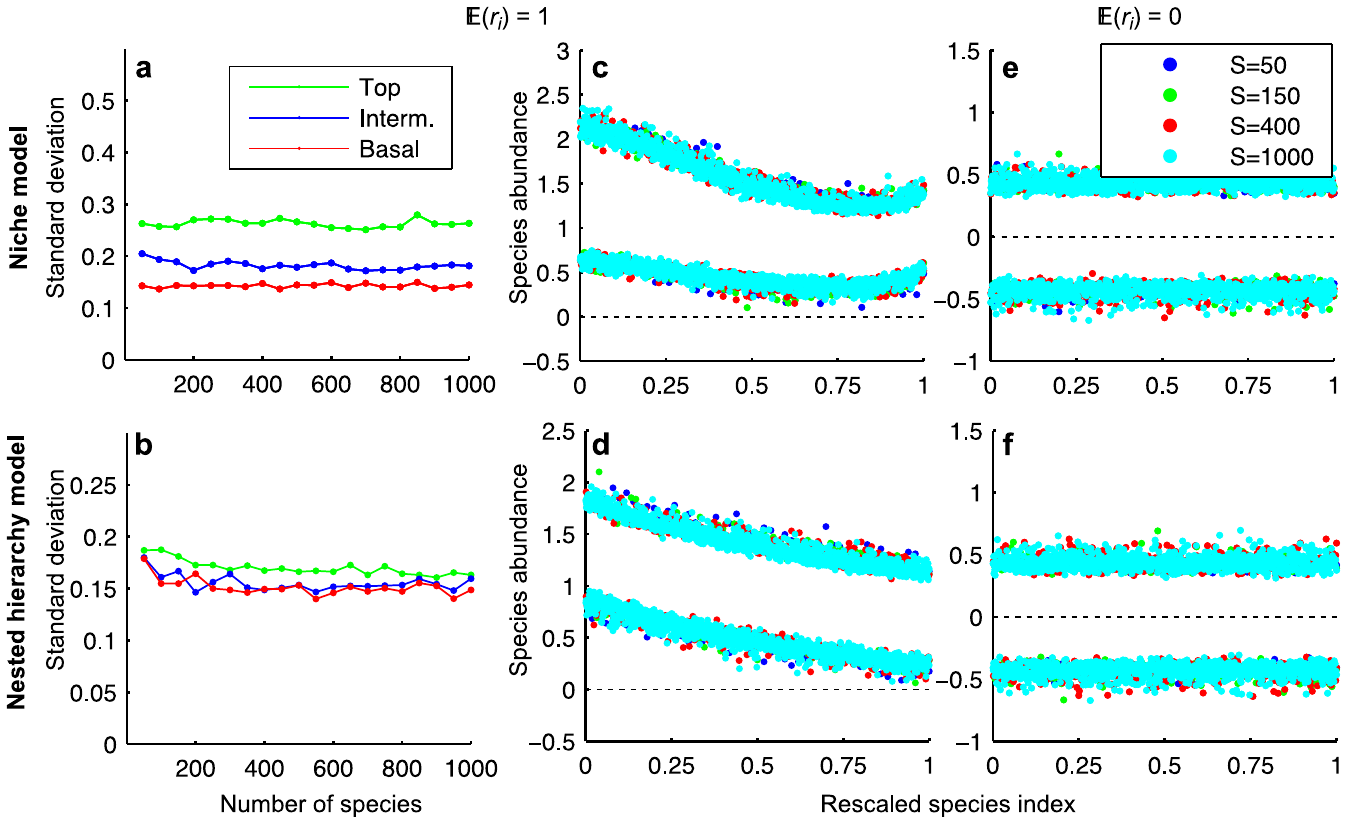}
\caption{Equilibria simulated from the niche and nested-hierarchy models under weak interactions and for i.i.d.~random growth rates. (\textbf{a}-\textbf{b}) The standard deviations of a top, an intermediate and a basal species are illustrated as a function of $S$ for $E(r_i) = 1$. Envelop of the equilibria for (\textbf{c}-\textbf{d}) $\E(r_i) = 1$ and (\textbf{e}-\textbf{f}) $\E(r_i) = 0$. $500$ simulations have been performed for each $S$. The growth rates are normally distributed with a fixed standard deviation $\sigma_r=0.15$.}\label{fig:nicheNested:rRandom:weak}
\end{figure}

\begin{figure}[h!]
\centering
\includegraphics{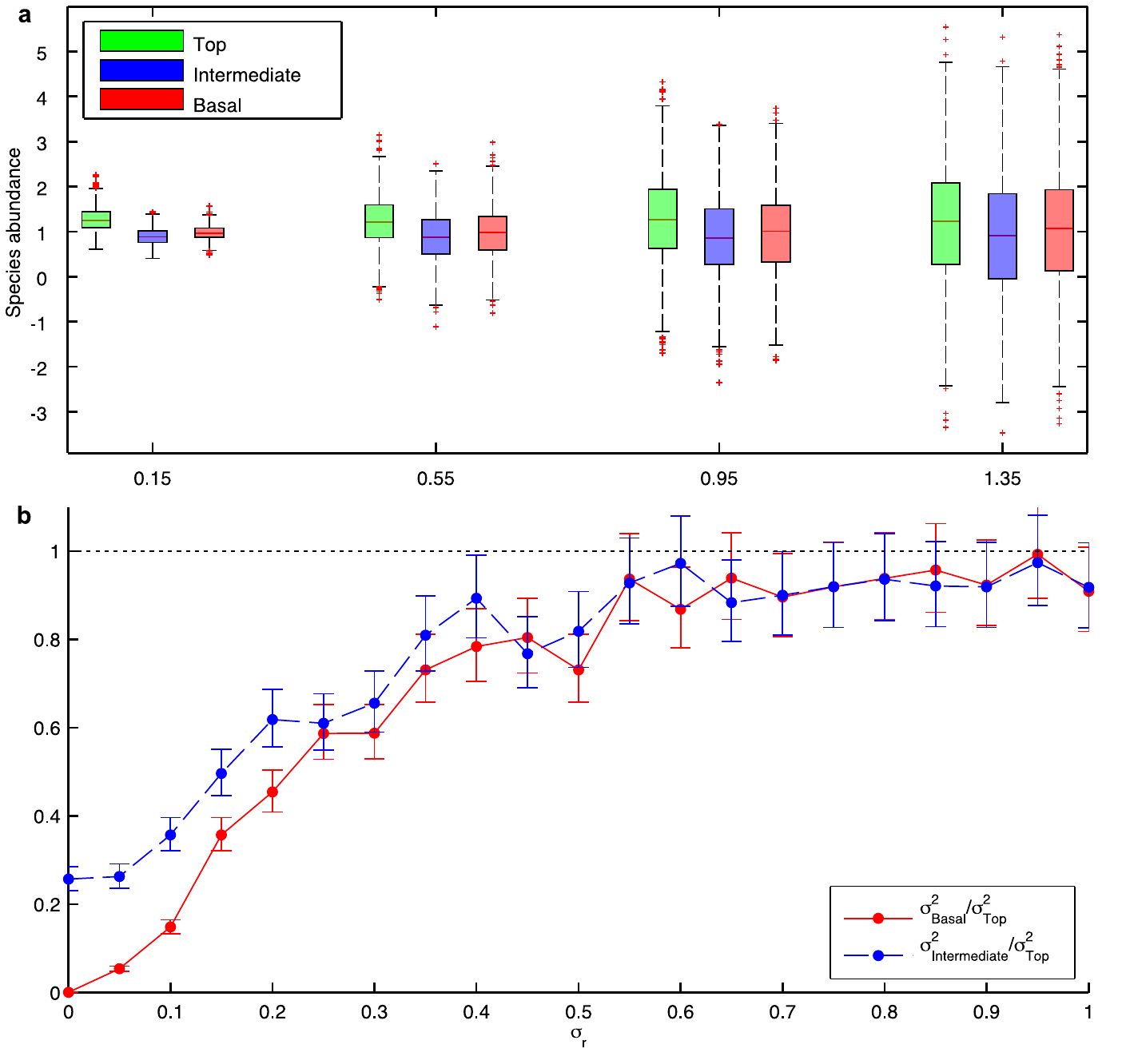}
\caption{When the standard deviation $\sigma_r$ of i.i.d.~random growth rates increases, equilibria variances of top, intermediate and basal species tend to become similar under weak interactions in a structured food-web (illustration for the niche model).  (\textbf{a}) Boxplot representation of the distributions. (\textbf{b}) Illustration of the variances ratio between basal and top (red trace, $\sigma^2_{\text{Basal}}/\sigma^2_{\text{Top}}$) and intermediate and top (blue trace, $\sigma^2_{\text{Intermediate}}/\sigma^2_{\text{Top}}$) as a function of $\sigma_r$ with a 95\% confidence interval. $1000$ simulations have been performed for each $\sigma_r$, $S=800$ and the growth rates are normally distributed with $E(r_i) = 1$.}\label{fig:nicheNested:rRandom:weak:variances}
\end{figure}

\subsection{Empirical networks}
We used our method on empirical food webs to illustrate how species abundances at equilibrium depend on their intrinsic role in the web. The topology is thus fixed to the observed one and the interactions are modeled as for any random structured web (see the section~\ref{sec:interactionsModels}). We report the results of four arbitrary networks. In Fig.~\ref{fig:empirical:meanStructuralVector}, the mean structural vector is used. In Fig.~\ref{fig:empirical:randomGrowthVector} two different random vectors with i.i.d.~components are chosen. The distributions under weak interactions are also illustrated in Fig.~\ref{fig:delta1:structured:histogramNestedNiche}, to allow better comparison with the niche and the nested-hierarchy models.

\begin{figure}[h!]
\centering
\includegraphics{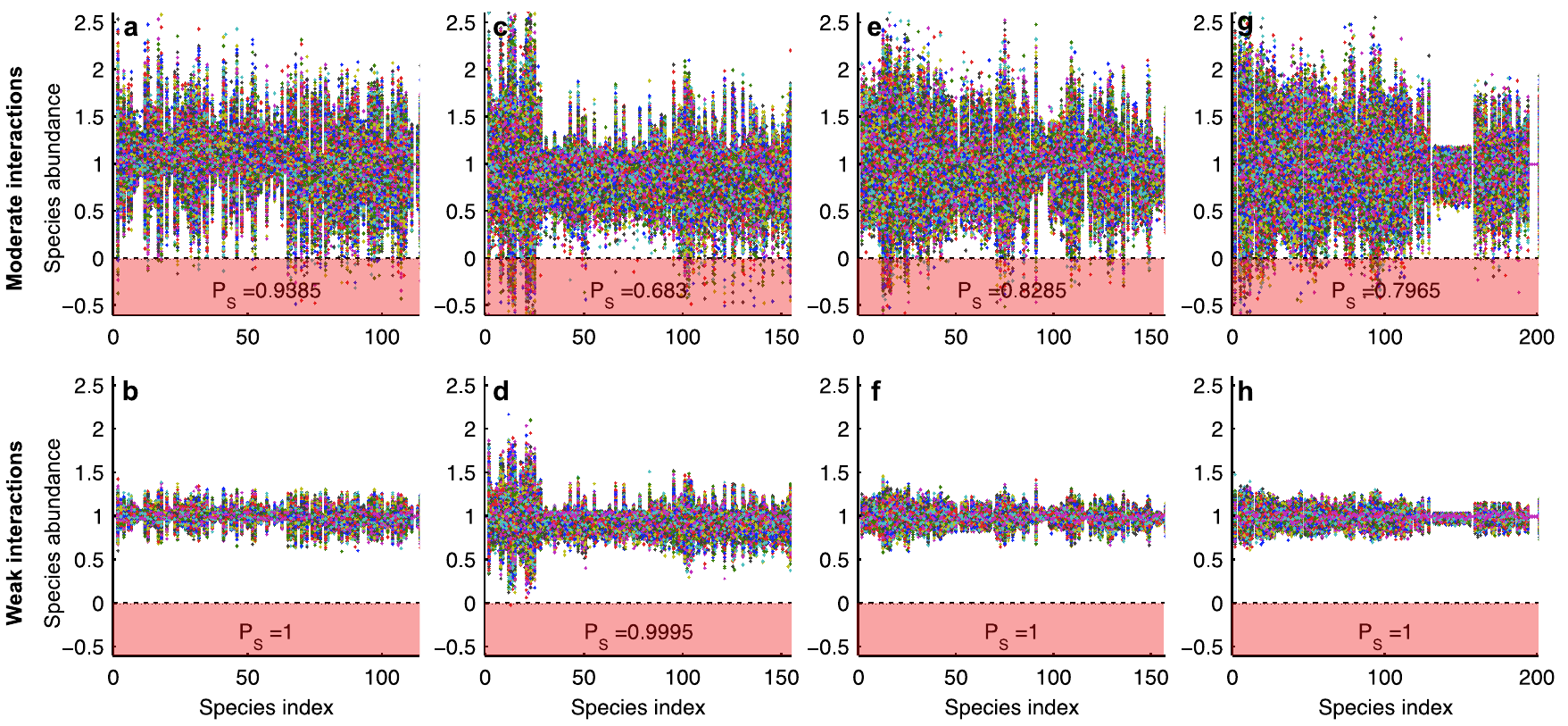}
\caption{Species abundance at equilibrium in four empirical food webs with the mean structural vector. (\textbf{a}-\textbf{b}) Stony, (\textbf{c}-\textbf{d}) Broom, (\textbf{e}-\textbf{f}) El-Verde and (\textbf{g}-\textbf{h}) Little Rock Lake. The mean of each $x_i^*$ is always one due to the mean structural vector, but the variance depends on the species index $i$, i.e.~on the species role in the web. 2000 simulations of $x^*$ are represented in each case. The mean structural vector has been simulated via Monte-Carlo methods (200 simulations).}\label{fig:empirical:meanStructuralVector}
\end{figure}

\begin{figure}[h!]
\centering
\includegraphics{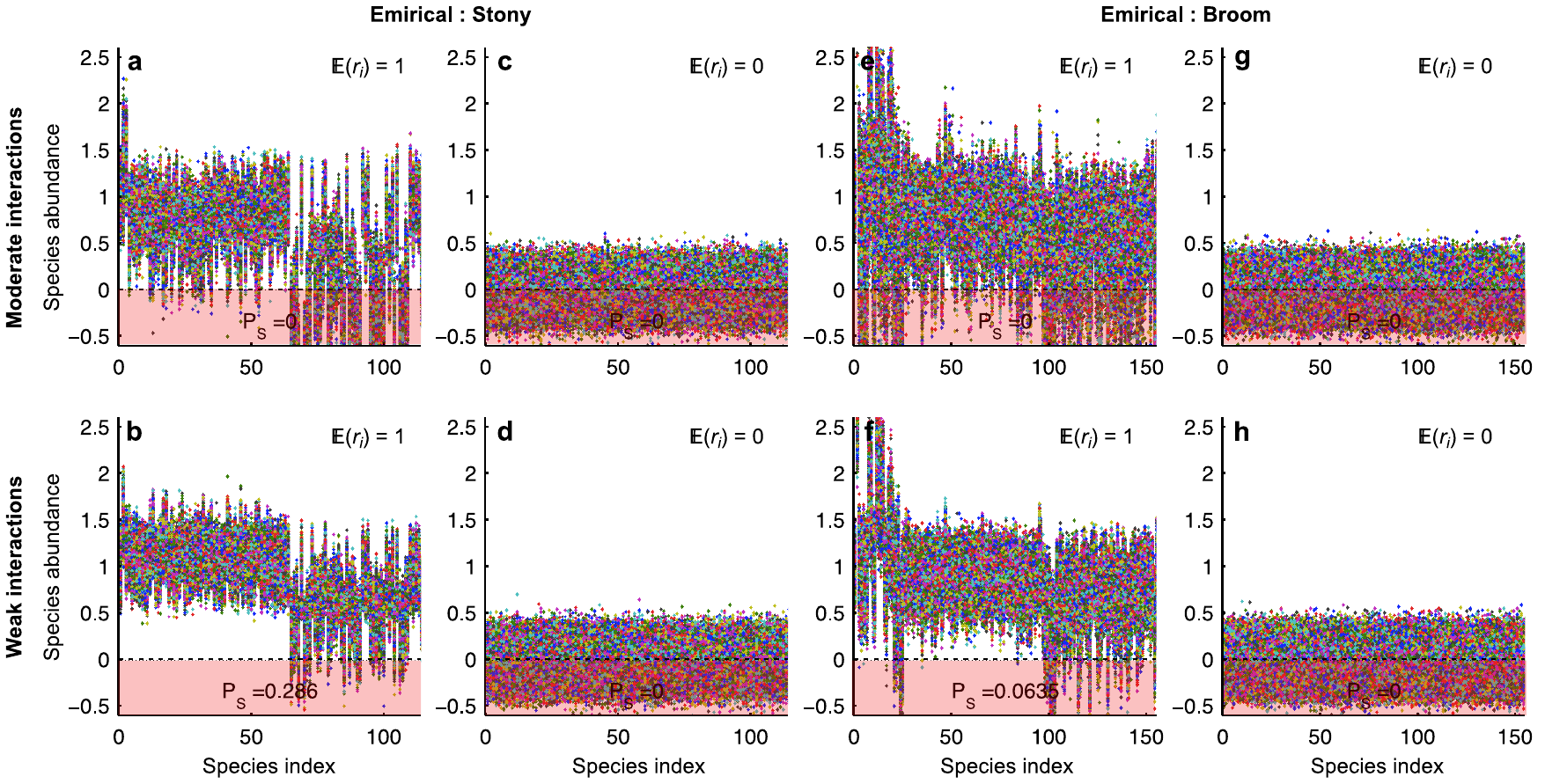}
\caption{Species abundance at equilibrium in two empirical food webs with random growth rates. (\textbf{a}-\textbf{d}) Stony and (\textbf{e}-\textbf{h}) Broom. The growth rates are i.i.d.~with mean one (\textbf{a}-\textbf{b} and \textbf{e}-\textbf{f}) or zero (\textbf{c}-\textbf{d} and \textbf{g}-\textbf{h}) and standard deviation fixed to 0.15. 2000 simulations of $x^*$ are represented in each case.}\label{fig:empirical:randomGrowthVector}
\end{figure}

\subsection{Connectance and interactions strengths}
In our approach, we fixed the connectance $C$ to arbitrary values in the mathematical developments, and to fixed values in the simulations. In this section, we briefly explore the consequence of a relationship between connectance and species number on feasibility, as observed in several contributions, e.g.~\cite{SchmidAraya2002,BanasekRichter2009}.
Observe first that, from the point of view of the average interaction strength, introducing zeros with a given probability $1-C$ in the interaction matrix is the same as multiplying the whole matrix by $C$. Therefore, one has
$$\frac{ A}{(CS)^{\delta}} = \frac{BC}{(CS)^{\delta}},$$ 
where $B$ is a matrix with all entries i.i.d.~and such that $\Prob(b_{ij}=0)=0$. Consider $C=\frac{1}{S^{\beta}}$, for $\beta\geq0$, which is a flexible function that adequately captures observed relationships between $C$ and $S$. We get
$$\frac{BC}{(CS)^{\delta}}=\frac{B}{S^{\delta+\beta(1-\delta)}}.$$
Above, we showed that feasibility is warranted when $\delta+\beta(1-\delta)>\frac{1}{2}$, i.e.~for
$$\beta>\frac{(\frac{1}{2}-\delta)}{(1-\delta)}.$$
Taking $\beta\geq\frac{1}{2}$ leads almost surely to feasible equilibria in the models considered here, independently of the exponent $\delta$. For $\beta<\frac{1}{2}$, $\delta$ can be smaller than 0.5. This shifts the three previously described regimes for $\delta$ to the left, so that $\delta$ can be smaller to reach the same results. Consequently, the existence of a relationship between $C$ and $S$ does not affect our conclusions.

\section{Consequences on local stability analysis}

Any local stability analysis should be preceded by a feasibility analysis~\cite{vandermeer1970,Roberts1974,Logofet1983Book}. Indeed, studying the local stability of a point that is not biologically feasible is not instructive on the behaviour of the network dynamics. Moreover, the Jacobian matrix of the system~\eqref{eq:def:J}, therefore its eigenvalues too, depend explicitly on $x^*$. Therefore, there is no warranty that randomly sampling directly the Jacobian may be sufficient to address local stability. Here, we explore this question, which extends the analyses of May and Allesina and co-workers~\cite{May1972,Allesina2012}.

\subsection{Random networks}

We first focus on the random model of May~\cite{May1972} in the case of moderate interactions in the Proposition~\ref{prop:stability:randomWeak} below. In this case, we find that May's criterion still holds when we randomly sample the interaction matrix $A$ (and not directly the Jacobian) under the additional condition that the equilibrium $x^*$ is feasible.

\subsubsection{Moderate interactions}
In Proposition~\ref{prop:stability:randomWeak}, we first show that May's criterion for stability, i.e.
$$\sqrt{CS}\tilde\sigma < |\theta| ,$$
is still sufficient for any feasible equilibrium when $\tilde\sigma$ is the standard deviation of the normalised interaction matrix $A/(CS)^{\delta}$, and not the standard deviation of the Jacobian matrix of the system. Recall that the case of moderate interactions is here very natural, as relying on Wigner's~\cite{Wigner1958} and Girko's~\cite{Girko1985} original convergence results. The criterion simply becomes
$$\sigma < |\theta|,$$
where $\sigma$ is the standard deviation of $A$.

\begin{prop}\label{prop:stability:randomWeak}
Under Assumption~\ref{assum:delta1/2} and if
$$\sigma < |\theta|,$$
any asymptotic feasible equilibrium of the model~\eqref{LVWM} is almost surely linearly stable.
\end{prop}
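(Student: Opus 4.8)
The plan is to exploit the multiplicative structure of the community matrix~\eqref{eq:def:J}, $J(x^{*})=\mathrm{diag}(x^{*})\,\tilde J$ with $\tilde J:=\theta I+A/\sqrt{CS}$, together with Volterra--Lyapunov theory: a real matrix $M$ is \emph{diagonally stable} if there is a positive diagonal $P$ with $PM+M^{T}P\prec0$, and such an $M$ satisfies that $DM$ is Hurwitz for \emph{every} positive diagonal $D$ (indeed, if $PM+M^{T}P=-Q\prec0$ then $R:=D^{-1}P$ is positive diagonal and, since diagonal matrices commute, $(DM)^{T}R+R(DM)=M^{T}P+PM=-Q\prec0$, so $DM$ is Hurwitz by Lyapunov's theorem). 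On the event that $x^{*}$ is feasible, $\mathrm{diag}(x^{*})$ is a positive diagonal matrix; hence everything reduces to showing that, almost surely and for $S$ large, $\tilde J$ is diagonally stable.

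\textbf{The random-matrix step.} I would take the natural candidate $P=I$, which reduces the task to checking that the symmetric part $\tfrac12(\tilde J+\tilde J^{T})=\theta I+\tfrac{1}{2\sqrt{CS}}(A+A^{T})$ is negative definite. Now $\tfrac12(A+A^{T})$ is a real symmetric random matrix with centred, identically distributed, essentially independent entries, whose moments are controlled by Assumption~\ref{assum:delta1/2}; by the semicircle law for Wigner-type matrices, the largest eigenvalue concentrates and $\lambda_{\max}\!\big(\tfrac{1}{2\sqrt{CS}}(A+A^{T})\big)$ converges almost surely to an explicit deterministic multiple of $\sigma$. The hypothesis $\sigma<|\theta|$ is exactly what makes this limit strictly smaller than $|\theta|$, so $\Prob\big(\theta I+\tfrac{1}{2\sqrt{CS}}(A+A^{T})\prec0\big)\to1$, i.e.\ $\tilde J+\tilde J^{T}\prec0$ eventually almost surely.

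\textbf{Conclusion.} I would then close the argument deterministically. On the event where $\tilde J+\tilde J^{T}\prec0$ \emph{and} $x^{*}$ is feasible, note that $x_i^{*}>0$ makes $\mathrm{diag}(x^{*})^{1/2}$ invertible, so $J(x^{*})$ is similar to $\mathrm{diag}(x^{*})^{1/2}\,\tilde J\,\mathrm{diag}(x^{*})^{1/2}$; if $\lambda$ is an eigenvalue of this matrix with unit eigenvector $v$, set $w=\mathrm{diag}(x^{*})^{1/2}v\neq0$, whence $\mathrm{Re}\,\lambda=\mathrm{Re}(w^{*}\tilde J w)=\tfrac12\,w^{*}(\tilde J+\tilde J^{T})w<0$. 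Since this holds for every eigenvalue, $x^{*}$ is linearly stable; as the defining event has probability tending to one, this yields the statement without ever conditioning on the vanishing-probability feasibility event itself.

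\textbf{Main obstacle.} The difficulty — and the reason the naive route fails — is the strong dependence between $x^{*}$ and $A$ coming from $x^{*}=-\tilde J^{-1}r$ in~\eqref{eq:def:x*}: one cannot treat $\mathrm{diag}(x^{*})$ as independent of $A$, so Girko's circular law (or an inhomogeneous circular law) cannot be applied to the product $\mathrm{diag}(x^{*})\tilde J$, and crude perturbation bounds (e.g.\ Bauer--Fike against the diagonal part $\theta\,\mathrm{diag}(x^{*})$) degrade by the factor $\max_i x_i^{*}/\min_i x_i^{*}$, which blows up once all $x_i^{*}$ are forced positive. The Lyapunov/symmetric-part route is designed precisely to bypass this: the estimate $\mathrm{Re}\,\lambda=\tfrac12 w^{*}(\tilde J+\tilde J^{T})w$ uses only positivity of $x^{*}$ and a spectral property of $A$ alone (via Proposition~\ref{prop:asympt:solut:sublinear}'s ingredients, the semicircle law), so it is completely insensitive to the coupling. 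The remaining technical point is to check that Assumption~\ref{assum:delta1/2}(vi) on the high moments of $a_{11}$ is exactly what upgrades the semicircle convergence of $\lambda_{\max}$ to an almost-sure statement, uniformly along $S$.
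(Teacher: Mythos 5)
Your argument is essentially the paper's own proof: you symmetrize $\tilde J$, apply the Wigner semicircle law to $\tfrac{1}{2\sqrt{CS}}(A+A^{T})$, and transfer negative definiteness to $J(x^{*})=\mathrm{diag}(x^{*})\,\tilde J$ on the feasibility event via diagonal stability, exactly as the paper does with $\tilde B = DB+B^{T}D$, $D=\mathrm{diag}(1/\sqrt{2})$, and the appeal to dissipativity (which you make explicit through the $\mathrm{diag}(x^{*})^{1/2}$ similarity and numerical-range computation instead of citing Logofet). The only caveat, shared verbatim with the paper's write-up, is that the semicircle edge of the symmetrized matrix is $\sqrt{2}\,\sigma$, so this route literally establishes stability under $\sqrt{2}\,\sigma<|\theta|$ rather than showing that $\sigma<|\theta|$ is ``exactly'' enough.
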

\begin{proof}
Consider $A/\sqrt{CS}$ so that $\sigma < |\theta|$. This is equivalent to say that $B=\lp\theta I + \frac{1}{\sqrt{CS}}A\rp$ possesses only eigenvalues with negative real parts. Define now the symmetric matrix $\tilde B = DB + B^TD$, where $D=\textrm{diag}\lp\frac{1}{\sqrt{2}}\rp$. Each non-diagonal entry of the matrix $\tilde B$ are independent of the others up to symmetry, and have mean zero and standard deviation $\sigma$. By the Wigner semicircle law and May's criterion, its eigenvalues are almost surely all negative when $S\to\infty$. The matrix $B$ is thus dissipative when $S\to\infty$, which implies that any feasible equilibrium of the model~\eqref{LVWM} is locally stable (see~\cite{Logofet1983Book}).
\end{proof}

Note that feasibility is required to conclude on the stability of the equilibrium. Indeed, we illustrate in Fig.~\ref{fig:delta1/2:May:JacobianVSInteractions} that a stable equilibrium in the sense of $J=(\theta I + A/\sqrt{CS})$ (as in~\cite{May1972}, i.e.~so that $\sqrt{SC}\sigma<|\theta|$), is stable in the sense of $J=\textrm{diag}(x^*)(\theta I + A/\sqrt{CS})$ only when it is feasible, which we showed is never the case in large systems (see Thm.~\ref{thm:feasibility:delta1/2}).

\begin{figure}[h!]
\centering
\includegraphics{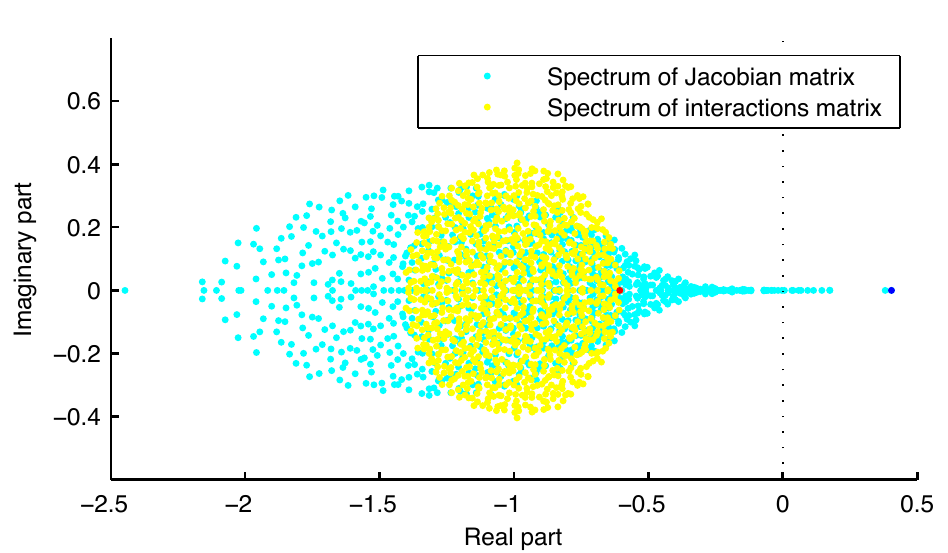}
\caption{Spectra of the interactions matrix and the Jacobian matrix for $\delta=0.5$ and $S=1000$. The spectrum of the interactions matrix $(\theta I + A/\sqrt{CS})$ is represented in yellow and the eigenvalue with largest real part is highlighted in red. With this renormalisation, the convergence result of Girko~\cite{Girko1985} is recovered and, since the parameters are $\sigma=0.4$, $C=1$ and $\theta = -1$, May's criterion holds. However, the equilibrium is very unlikely to be feasible with $S=1000$ (see Fig.~\ref{fig:main1}), so that the criterion of Prop.~\ref{prop:stability:randomWeak} is violated and the resulting equilibrium is not stable. This is illustrated by the spectrum of the Jacobian matrix $J(x^*)$ in cyan. The eigenvalue with largest real part is highlighted in blue.}\label{fig:delta1/2:May:JacobianVSInteractions}
\end{figure}

\subsubsection{Weak interactions}
In the case of weak interactions, May's criterion~\cite{May1972} and the criteria for competition and prey-predation established by Allesina and Tang~\cite{Allesina2012} are asymptotically trivially satisfied if the parameters are chosen so that the (deterministic) equilibrium is admissible. Indeed, the criterion becomes $\sigma/\sqrt{SC} < |\theta|$ and $\sigma/\sqrt{SC}\to 0$ for $S\to\infty$.

Concerning mutualistic networks, the criterion in~\cite{Allesina2012} is non-trivial. We show that this criterion still holds under the additional condition that $x^*$ is feasible: 
\begin{prop}
Under weak interactions (Assumptions~\ref{assum:delta1}) and for a mutualistic interaction matrix $A$, $x^*$ is locally stable when $x^*$ is feasible and when
\begin{equation}\label{eqn:stable:mut:weak}
\, \E(\vert a_{ij}\vert)<\vert\theta\vert.
\end{equation}
\end{prop}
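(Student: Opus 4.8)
The plan is to follow the template of Proposition~\ref{prop:stability:randomWeak}. Since $\delta=1$, the community matrix is $J(x^*)=\mathrm{diag}(x^*)\,B$ with $B:=\theta I+A/(CS)$, and feasibility of $x^*$ says precisely that $\mathrm{diag}(x^*)$ is a positive definite diagonal matrix. So it suffices to prove that $B$ is \emph{strictly dissipative} for $S$ large, i.e.\ that $B+B^{T}$ is almost surely negative definite as $S\to\infty$; by the D-stability fact of~\cite{Logofet1983Book} (as used in Proposition~\ref{prop:stability:randomWeak}) this forces $J(x^*)=\mathrm{diag}(x^*)B$ to be Hurwitz for every feasible $x^*$. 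Concretely, with $Q=\mathrm{diag}(x^*)^{-1}\succ0$ one has, since diagonal matrices commute,
$$Q\,J(x^*)+J(x^*)^{T}Q=\mathrm{diag}(x^*)^{-1}\mathrm{diag}(x^*)B+B^{T}\mathrm{diag}(x^*)\mathrm{diag}(x^*)^{-1}=B+B^{T}\prec0,$$
so $x\mapsto x^{T}Qx$ is a strict Lyapunov function for $\dot x=J(x^*)x$. Only positivity of the $x_i^*$ is used here; under weak interactions $x^*$ moreover converges a.s.\ to the deterministic limit of Corollary~\ref{cor:feasibility:delta1:mutPred}, so ``feasible'' just means that this limit has strictly positive entries.

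To establish $B+B^{T}\prec0$, I would decompose $A=M+\hat W$ as in~\eqref{eq:hatW} and Lemma~\ref{lemma:Geman1980}: $M$ is the deterministic matrix with $C\mu_{A}$ off the diagonal and $0$ on it, and $\hat W=A-M$ is centred. Then
$$B+B^{T}=2\theta I+\frac{2M}{CS}+\frac{\hat W+\hat W^{T}}{CS},\qquad \frac{2M}{CS}=\frac{2\mu_{A}}{S}\bigl(\mathbbm{1}\mathbbm{1}^{T}-I\bigr).$$
The rank-one mean term has largest eigenvalue $2\mu_{A}(1-1/S)\to 2\mu_{A}$ and all its other eigenvalues equal to $-2\mu_{A}/S\to0$. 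For the centred term, Lemma~\ref{lemma:Geman1980} applies verbatim --- it only needs the entries of each \emph{row} of $A$ to be independent, which the mutualistic constraint $a_{ij}=0\Leftrightarrow a_{ji}=0$ does not destroy --- so $\lVert\hat W/S\rVert\to0$ almost surely, whence $\lVert(\hat W+\hat W^{T})/(CS)\rVert\leq\tfrac2C\lVert\hat W/S\rVert\to0$ almost surely. Weyl's inequality then gives
$$\lambda_{\max}\bigl(B+B^{T}\bigr)\leq 2\theta+2\mu_{A}\Bigl(1-\tfrac1S\Bigr)+\tfrac2C\lVert\hat W/S\rVert\;\xrightarrow[S\to\infty]{}\;2\bigl(\mu_{A}-|\theta|\bigr)<0,$$
the final inequality being exactly the stated criterion $\E(|a_{ij}|)<|\theta|$ --- in the notation of Table~\ref{table:rs}, $\mu_{A}<|\theta|$, which (as $\mu_{A}>0$ for mutualism) is also contained in Assumption~\ref{assum:delta1}(ii). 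Hence $B+B^{T}\prec0$ almost surely for all $S$ large enough, and the first paragraph concludes.

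I expect the only genuinely non-routine step to be the spectral-norm bound $\lVert\hat W/S\rVert\to0$ for the non-i.i.d.\ mutualistic matrix, but this is already Lemma~\ref{lemma:Geman1980}; everything else is Weyl's inequality together with the diagonal Lyapunov trick of Proposition~\ref{prop:stability:randomWeak}. A harmless bookkeeping detail is that the mutualism model forces $a_{ii}=0$, so $M$ differs from the all-$C\mu_{A}$ matrix of Lemma~\ref{lemma:Geman1980} by $C\mu_{A}I$, i.e.\ by $\mu_{A}I/S$ after normalisation, which is negligible in operator norm and absorbable into the centred term. Conceptually, the rank-one outlier of $B$ lies near $\theta+\mu_{A}$, so $\E(|a_{ij}|)<|\theta|$ is precisely the condition making the Allesina--Tang mutualistic spectrum Hurwitz in the weak-interaction normalisation; the proposition asserts that the \emph{same} inequality certifies stability of the \emph{true} community matrix $J(x^*)$ once one has checked that $x^*$ is feasible.
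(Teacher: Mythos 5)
Your proposal is correct, and its skeleton is the same as the paper's: reduce local stability of $J(x^*)=\mathrm{diag}(x^*)B$, $B=\theta I+A/(CS)$, to dissipativity of $B$, i.e.\ negative definiteness of the symmetrised matrix, and then invoke the D-stability argument (the paper cites~\cite{Logofet1983Book} for this step; you make the diagonal Lyapunov function $Q=\mathrm{diag}(x^*)^{-1}$ explicit, which is a nice clarification of why only positivity of the $x_i^*$ is needed). Where you genuinely diverge is in how the top eigenvalue of $B+B^{T}$ is controlled: the paper quotes the F\"uredi--Koml\'os result~\cite{Furedi1981} to assert that the largest eigenvalue of $\tilde B=DB+B^{T}D$ equals $\sqrt2\,\E(|a_{ij}|)+\sqrt2\,\theta$, whereas you derive an upper bound from ingredients already in the paper, namely the mean/fluctuation split $A=M+\hat W$ of~\eqref{eq:hatW}, the spectral-norm estimate $\lVert\hat W/S\rVert\to0$ of Lemma~\ref{lemma:Geman1980}, and Weyl's inequality applied to $2\theta I+\tfrac{2\mu_A}{S}(\mathbbm{1}\mathbbm{1}^{T}-I)+(\hat W+\hat W^{T})/(CS)$. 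Your route is more self-contained and handles the dependence $a_{ij}=0\Leftrightarrow a_{ji}=0$ of the mutualistic model explicitly (the external citation in the paper is stated for independent entries up to symmetry), at the price of giving only an upper bound on the extreme eigenvalue rather than its exact limit --- which is all that is needed here. Your reading of the criterion $\E(|a_{ij}|)<|\theta|$ as $\mu_A<|\theta|$ (conditional mean, hence essentially Assumption~\ref{assum:delta1}(ii) in the mutualistic case) agrees with the limiting eigenvalue $2(\theta+\mu_A)$ and with the paper's own usage, and the diagonal bookkeeping ($a_{ii}=0$, so $M$ differs from the all-$C\mu_A$ matrix by a term of order $1/S$) is handled correctly.
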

\begin{proof}
the inequality~\eqref{eqn:stable:mut:weak} has been computed in~\cite{Allesina2012} using the Ger{\v s}gorin circles~\cite{Gershgorin1931}. With analogous arguments as in the proof of Prop~\ref{prop:stability:randomWeak}, we define the symmetric matrix $\tilde B = DB + B^TD$, where $D=\textrm{diag}\lp\frac{1}{\sqrt{2}}\rp$ and $B=\lp\theta I + \frac{1}{CS}A\rp$. Its largest eigenvalue has been computed in~\cite{Furedi1981} and is given by $\sqrt2\E(|a_{ij}|) + \sqrt2\theta$, which is negative when~\eqref{eqn:stable:mut:weak} holds. $B$ is thus dissipative and consequently the equilibrium locally stable~\cite{Logofet1983Book}.
\end{proof}

\subsection{Structured models}
Determining stability criteria in structured models is mathematically difficult, since the entries of the resulting random matrix $J(x^*)$ can become highly dependent. In this situation, a direct use of the classical results in~\cite{Geman1982,Tao2010,Sommers1988} is no longer possible. In Fig.~\ref{fig:delta1:structure:spectra}, we illustrate some spectra of $A$ and $J(x^*)$ for 50 networks of the three structured models to evidence their difference. We find that a link between feasibility and stability is very likely to exist, in a similar way as for unstructured networks. 

\begin{figure}[h!]
\centering
\includegraphics{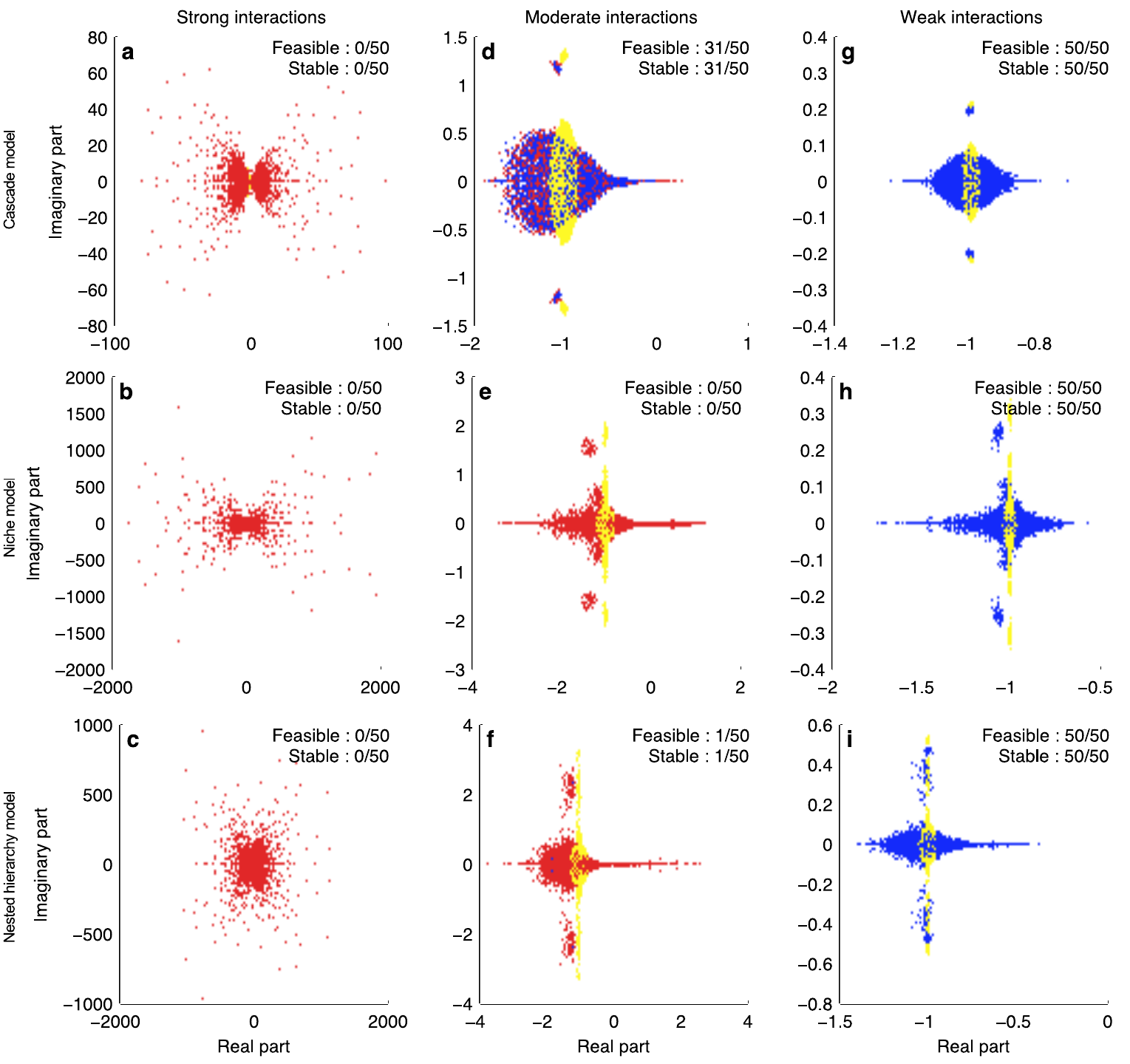}
\caption{Spectra of the Jacobian matrix for different values of $\delta$ in structured models. The eigenvalues of $J(x^*)$ are plotted in blue when $x^*$ is feasible, and in red otherwise; the eigenvalues of the interactions matrix $(\theta I + A/\sqrt{CS})$ is represented in yellow. (\textbf{a} - \textbf{c}) For strong interactions, the systems are degenerate for the three models, i.e. never feasible nor stable. (\textbf{d} - \textbf{i}) For moderate interactions and weak interactions, we observe the same kind of relationship between feasibility and stability as for unstructured models. For the cascade model (d, g), $x^*$ converges to a deterministic constant; for the niche and nested-hierarchy models (e, f, h, i), $x^*$ converges on a compact support, which can include negative values. It is thus necessary to tune the parameters to obtain feasibility; here, we chose $S=150$, $\sigma = 0.4$, $C=0.25$ and $\theta =-1$ in all cases.}\label{fig:delta1:structure:spectra}
\end{figure}

\begin{figure}[h!]
\centering
\includegraphics{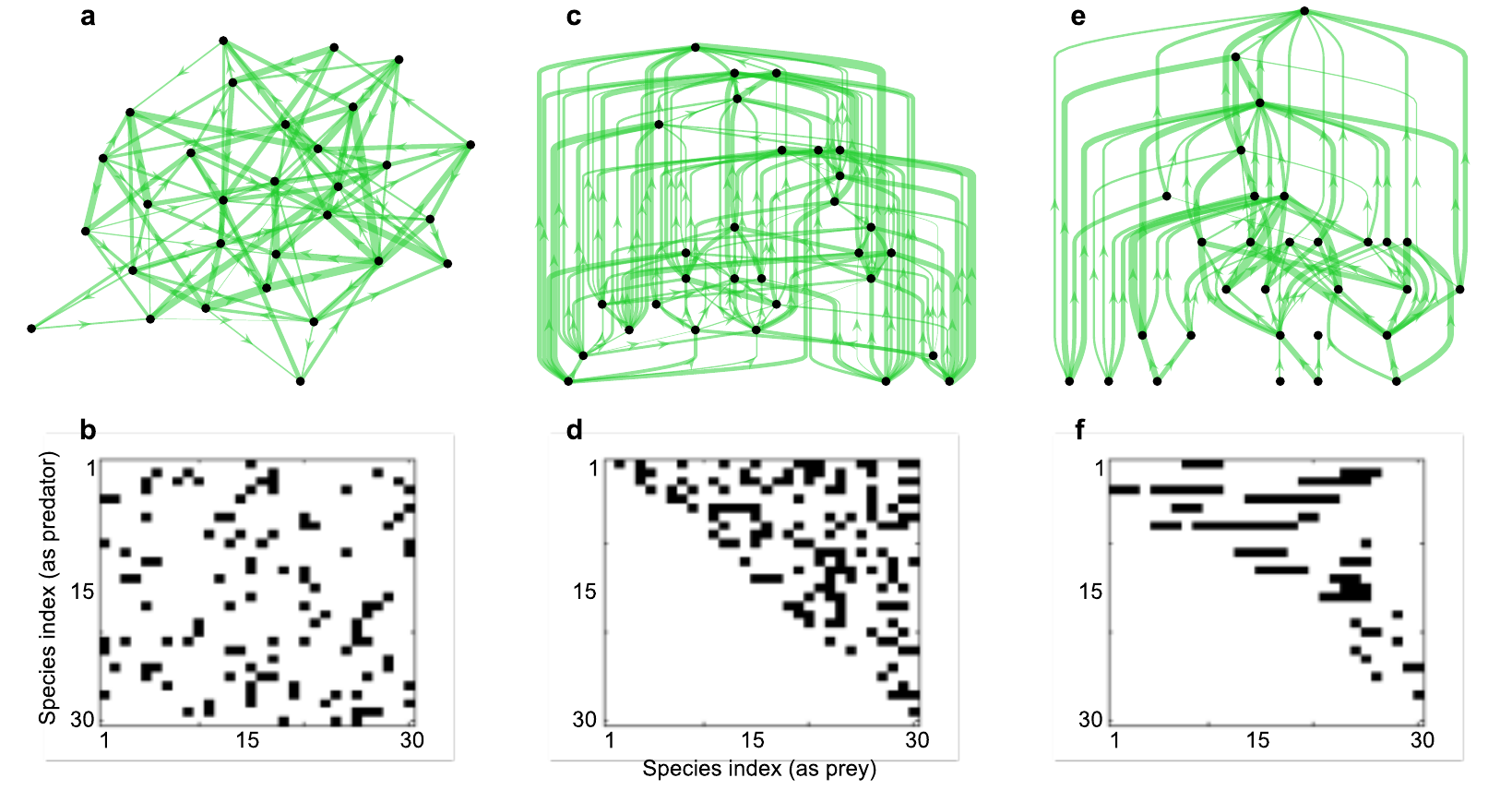}
\caption{Illustration of some prey-predator networks and adjacency matrices. (\textbf{a}-\textbf{b}) Unstructured model for predation. (\textbf{c}-\textbf{d}) The cascade model. (\textbf{e}-\textbf{f}) The niche model. The parameters are $C=0.25$ and $S=30$. The interaction strengths are sampled at random, according to the procedure described in section S.2.3. Thicker green arrows represent larger interaction strengths. Only positive interactions are represented.}\label{fig:networks}
\end{figure}

\bibliography{Feasibility_Arxiv}
\bibliographystyle{naturemag}

\section*{Acknowledgement}
We thank S.~Gray for useful comments on the manuscript.



\end{document}